\newcommand{\abs}[1]{\left\lvert#1\right\rvert}
\providecommand{\tr}{{\rm Tr}}
\renewcommand{\phi}{\varphi}
\newcommand{\bra}[1]{\left\langle #1\right\rvert}
\newcommand{\ket}[1]{\left\lvert #1\right\rangle}
\newtheorem{theorem}{Theorem}
\newtheorem{lemma}{Lemma}
\newtheorem{definition}{Definition}
\begin{document}


\title{Resources for bosonic quantum computational advantage}

\author{Ulysse Chabaud}
\email{ulysse.chabaud@inria.fr}
\affiliation{Institute for Quantum Information and Matter, California Institute of Technology, 1200 E California Blvd, Pasadena, CA 91125, USA}
\affiliation{DIENS, \'Ecole Normale Sup\'erieure, PSL University, CNRS, INRIA, 45 rue d'Ulm, Paris 75005, France}
\author{Mattia Walschaers}
\email{mattia.walschaers@lkb.upmc.fr}
\affiliation{Laboratoire Kastler Brossel, Sorbonne Universit\'{e}, CNRS, ENS-Universit\'{e} PSL,  Coll\`{e}ge de France, 4 place Jussieu, F-75252 Paris, France}




\begin{abstract}
Quantum computers promise to dramatically outperform their classical counterparts. However, the non-classical resources enabling such computational advantages are challenging to pinpoint, as it is not a single resource but the subtle interplay of many that can be held responsible for these potential advantages. In this work, we show that every bosonic quantum computation can be recast into a continuous-variable sampling computation where all computational resources are contained in the input state. Using this reduction, we derive a general classical algorithm for the strong simulation of bosonic computations, whose complexity scales with the non-Gaussian stellar rank of both the input state and the measurement setup. We further study the conditions for an efficient classical simulation of the associated continuous-variable sampling computations and identify an operational notion of non-Gaussian entanglement based on the lack of passive separability, thus clarifying the interplay of bosonic quantum computational resources such as squeezing, non-Gaussianity and entanglement.
\end{abstract}

\maketitle

\textit{Introduction.---}Ever since the earliest quantum algorithms \cite{doi:10.1098/rspa.1992.0167,doi:10.1137/S0097539796298637,doi:10.1137/S0036144598347011}, it has been clear that quantum computing holds the potential of reaching exponential speed-ups as compared to classical computers---be it for very specific problems. The computational advantage \footnote{There is some ambiguity in literature surrounding the terms ``quantum supremacy'', ``quantum advantage'' and ``quantum speedup''. In this work, we use the term ``quantum computational advantage'' to refer to any quantum computational protocol that cannot be efficiently executed by a classical machine.} of quantum computers was more rigorously established by connecting the classical simulation of certain quantum sampling problems to the collapse of the polynomial hierarchy of complexity classes \cite{doi:10.1098/rspa.2010.0301,10.1145/1993636.1993682}. Boson Sampling, in particular, has drawn the attention of a part of the physics community, because the protocol is naturally implemented with indistinguishable photons and linear optics. These sampling problems also lie at the basis of the random circuit sampling protocol \cite{RCS}, which would lead to the first experimental claim of a quantum computational advantage \cite{QuantumSupremacy}. However, in a game of constantly shifting goal posts, this claim has already been challenged \cite{pan2021solving}. 

At the same time, the development of building blocks for potential quantum computing hardware has drastically accelerated during the last decade. Even though platforms such as superconducting circuits and trapped ions have booked great successes, the present work mainly focuses on optical implementations. The Knill--Laflamme--Milburn scheme \cite{KLM} provided the first proposal for a universal photonic quantum computer, which to this day remains extremely challenging to implement. Even though Boson Sampling \cite{10.1145/1993636.1993682} renewed the interest in photonic quantum computing, generating, controlling, and detecting sufficiently many indistinguishable photons is still very challenging. 

To circumvent the difficulties of dealing with single photons and conserve the advantages that optics can provide for quantum information processing, such as intrinsic resilience against decoherence, several research groups have explored continuous-variable (CV) quantum optics as an alternative. Rather than detecting photons, this approach encodes information in the quadratures of the electromagnetic field, which can be detected through either homodyne or double homodyne (sometimes called heterodyne) measurements~\cite{Leonhardt-essential}. Equipped with its own framework for quantum computing in infinite-dimensional Hilbert spaces \cite{gu_quantum_2009}, the CV approach has the advantage of deterministic generation of large entangled states, over millions of subsystems \cite{Su:12,PhysRevLett.112.120505,Asavanant:2019aa,Larsen:2019aa,cai-2017}. By now, CV quantum optics is considered a promising platform for quantum computing \cite{Bourassa2021blueprintscalable}.
Several sampling problems have also been translated to an infinite-dimensional context \cite{PhysRevLett.113.100502,PhysRevLett.118.070503,PhysRevLett.119.170501,PhysRevA.96.062307,PhysRevA.96.032326}. Among these proposals, Gaussian Boson Sampling in particular attracted much attention, which led ultimately to experimental realisations beyond the reach of classical computers \cite{doi:10.1126/science.abe8770,PhysRevLett.127.180502,madsen2022quantum}. 

From a complexity-theoretic point of view, it is well understood why some of these specific sampling problems cannot be efficiently simulated by a classical computer \cite{doi:10.1126/sciadv.abi7894}.
From a physical point of view, several groups have explored the required resources for reaching a quantum computational advantage. Such endeavours typically aim to identify a physical property without which a setup can be efficiently simulated classically. Phase-space descriptions of quantum computations, such as the Wigner function~\cite{Wigner1932,gross2006hudson}, are particularly useful in that respect. For example, it has been shown that negativity of the Wigner function is one of such necessary resources \cite{PhysRevLett.109.230503,Veitch_2013}, albeit not sufficient~\cite{garcia2020efficient}. More recently, it became clear that squeezing and entanglement also play an important role in the hardness of some sampling problems, but only when combined in the right way \cite{PhysRevResearch.3.033018,chabaud2021holomorphic}. In Gaussian Boson Sampling, for example, the state at hand is an entangled Gaussian state, which can be described using a positive Wigner function, while negativity of the Wigner function is provided by the non-Gaussian photon detectors. This potential resourcefulness of the measurements is one reason why sampling problems are complicated to analyse.

In this work, we address this problem by introducing a new paradigm for studying resources for bosonic computations. Our contribution is three-fold:
Firstly, we show that every bosonic sampling computation has a dual CV sampling setup where the measurement is performed using double homodyne detection, which can be understood as quasi-classical and thus non-resourceful. This means that, in this dual sampling setup, all computational resources are ingrained in the measured state. 
Secondly, using this construction, we obtain a classical algorithm for strongly simulating bosonic computations, whose complexity scales with the stellar rank, a discrete non-Gaussian measure~\cite{PhysRevLett.124.063605}, of both the input state and the measurement setup of the original computation. Our algorithm is a generalisation of that of~\cite{chabaud2020classical}---which applies only to a restricted set of bosonic computations---to essentially any bosonic computation. Our result thus establishes the stellar rank as a necessary non-Gaussian resource for reaching a quantum computational advantage with bosonic information processing.
Thirdly, we further show that the associated CV sampling setup can also be efficiently simulated classically whenever its corresponding input state is passively separable. We explain that states that are not passively separable possess non-Gaussian entanglement, thus showing that this type of entanglement is necessary for reaching a quantum computational advantage.
Our results allow us to clarify the role played by different non-classical resources in enabling quantum computational advantage, which we illustrate with the example of Boson Sampling.

\textit{Sampling tasks.---}Our starting point is that of a general sampling setup, where a quantum state $\hat\rho$ over $m$ subsystems, or modes, is measured by a series of $m$ local detectors. We assume that the $k^{th}$ detector measures an observable $\hat Y_k$ with a spectral decomposition $\hat Y_k = \smash{\int_{{\cal Y}_k}y\hat P_{k;y_k} dy}$, where ${\cal Y}_k$ is the spectrum of $\hat Y_k$. Here, we limit ourselves to projective measurements, but our results can be extended to more general positive operator-valued measures through Naimark's dilation theorem. 

In a sampling setup, our goal is to sample detector outcomes with respect to the probability distribution given by the Born rule: $P(y_1,\dots, y_m \lvert \hat \rho):= \tr \left[\hat \rho \bigotimes_{k=1}^m\hat P_{k;y_k}\right]$.
For simplicity, we can assume that the projectors are rank-one, such that $\hat P_{k;y_k} = \ket{y_k}\!\bra{y_k}$. The measurement can thus be resourceful if $\ket{y_k}$ has a negative Wigner function or if it contains squeezing. A priori, the state $\hat\rho$ can be any multimode mixed state, but in a typical sampling setup it would be generated by applying a series of few-mode gates to a set of single-mode input states.

\textit{Stellar hierarchy.---}Hereafter, we describe bosonic states using the stellar hierarchy~\cite{PhysRevLett.124.063605} (see the Appendix \ref{app:stellar} for a concise review). This formalism associates to each $m$-mode pure state $\ket{\bm\psi}=\sum_{\bm n\ge\bm0}\psi_{\bm n}\ket{\bm n}$ its stellar (or Bargmann) function $\smash{F^\star_{\bm\psi}(\bm z)=\sum_{\bm n\ge\bm0}\frac{\psi_{\bm n}}{\sqrt{\bm n!}}\bm z^{\bm n}}$, for all $\bm z\in\mathbb C^m$, and classifies bosonic states according to their stellar rank: pure states of finite stellar rank $r^\star$ are those states whose stellar function is of the form $F^\star(\bm z)=P(\bm z)G(\bm z)$, where $P$ is a multivariate polynomial of degree $r^\star$ and $G$ is a multivariate Gaussian. Such states can be decomposed as $\hat G\ket{\bm C}$, where $\hat G$ is a Gaussian unitary and $\ket{\bm C}$ is a core state, i.e.\ a finite superposition of Fock states. The number of nonzero coefficients of $\ket{\bm C}$ is called the core state support size.
For mixed states, the stellar rank is defined by a convex roof construction: $\smash{r^\star(\hat\rho)=\inf_{p_i,\bm\psi_i}\sup r^\star(\bm\psi_i)}$, where the infinimum is over the decompositions $\hat\rho=\smash{\sum_ip_i\ket{\bm\psi_i}\!\bra{\bm\psi_i}}$. The stellar rank is a faithful and operational non-Gaussian measure~\cite{chabaud2021holomorphic}, as it is invariant under Gaussian unitaries, non-increasing under Gaussian maps, and it lower bounds the minimal number of non-Gaussian operations (such as photon additions or photon subtractions) necessary to prepare a bosonic state from the vacuum, together with Gaussian unitary operations. Moreover, any state can be approximated arbitrarily well in trace distance by states of finite stellar rank, and an optimal approximating state of a given stellar rank can be found efficiently~\cite{chabaud2020certification}.

To establish the duality between sampling an outcome from the distribution $P(y_1,\dots, y_m)$ and double homodyne sampling, we must analyse the pure states $\ket{y_k}$. It is convenient to use the stellar hierarchy to describe them: we can represent any single-mode state $\ket{y_k}$ of finite stellar rank as \cite{PhysRevLett.124.063605}
\begin{equation}\label{eq:yk}
    \ket{y_k} = \frac1{\sqrt{{\cal N}_k}} \left[\prod_{j=1}^{r^{\star}(y_k)} \hat D(\beta_{k;j}) \hat a^{\dag}_k \hat D^{\dag}(\beta_{k;j})\right]\ket{G_k},
\end{equation}
where $\smash{r^{\star}(y_k)}\in\mathbb N$ denotes the stellar rank of the state $\ket{y_k}$, $\ket{G_k}$ is a Gaussian state, $\hat D(\beta_{k;j})$ is a displacement operator that acts on mode $k$ with $\beta_{k;j}\in\mathbb C$, $\smash{\hat a^{\dag}_k}$ is the creation operator in mode $k$, and ${\cal N}_k$ is a normalisation factor. In this case, we can interpret $\ket{y_k}$ as an $r^{\star}(y_k)$-photon-added Gaussian state (when $\smash{r^{\star}(y_k)}=0$, the empty product is the identity operator by convention). Furthermore, since we can approximate any state $\ket{y_k}$ by a finite-rank state to arbitrary precision in trace distance, we assume that all $\ket{y_k}$ have a---possibly high---finite stellar rank.

The single-mode Gaussian states $\ket{G_k}$ can always be obtained from the vacuum with squeezing and displacement operations. This allows us to write $\ket{G_k} = \hat S_k \ket{\alpha_k} $, where $\hat S_k$ is a suitably chosen squeezing operation and $\ket{\alpha_k}=\hat D(\alpha_k)\ket0_{vac}$ is a coherent state. Combining this with Eq.~\eqref{eq:yk} we can now recast
\begin{equation}\label{eq:sampling}
    P(y_1,\dots, y_m \lvert \hat \rho) = \frac1{\cal N} \tr \left[ \hat{\bm S}^{\dag} \hat \rho^- \hat{\bm S} \bigotimes_{k=1}^m\ket{\alpha_k}\!\bra{\alpha_k} \right],
\end{equation}
where $\hat{\bm S}:=\smash{\bigotimes_k}\hat S_k$ and $\hat \rho^-$ is a non-normalised photon-subtracted state, given by $\smash{\hat \rho^- := \hat{\bm A} \hat \rho \hat{\bm A}^{\dag}}$, with a photon-subtraction operator $\smash{\hat{\bm A}:= \bigotimes_{k=1}^m\prod_{n=1}^{r^{\star}(y_k)} \hat D(\beta_{k;n}) \hat a_k \hat D^{\dag}(\beta_{k;n})}$.
The normalisation factor ${\cal N}$ in Eq.~\eqref{eq:sampling} is directly related to the detectors we use, thus we assume it to be known a priori.

\textit{Coherent state samplers.---}Double homodyne measurement corresponds to a (subnormalised) projection onto coherent states~\cite{Leonhardt-essential}. Hence, the expression in Eq.~\eqref{eq:sampling} shows that sampling measurement outcomes $y_1,\dots, y_m$ can always be connected to performing double homodyne measurements on a state that is obtained by squeezing and subtracting photons from the initial state $\hat \rho$. The implementation of photon subtraction generally requires measurements on auxiliary modes. The most common implementation involves a photon-counting measurement~\cite{lvovsky2020production}, but this is not compatible with our aim of not having any resources at the level of the measurement, since these measurements are represented by negative Wigner functions. Thus, we introduce a more unusual construction inspired by sum-frequency generation \cite{boyd2020nonlinear}. 

To subtract a photon in a mode $k$ from a state $\hat\rho$, we attach an auxiliary mode to our system, containing exactly one photon. This state is injected in a very weak two-mode squeezer, given by a unitary $\hat U(\xi) = \exp[i\xi(\hat a_k^{\dag}\hat a_{\rm aux}^{\dag} + \hat a_k\hat a_{\rm aux})]$ (acting as identity on all except the $k^{th}$ and the auxiliary modes). After having applied $\hat U(\xi)$, we project the auxiliary mode on the vacuum state to find $\tr_{\rm aux} \left\{\hat U(\xi) [\hat \rho \otimes \ket{1}\!\bra{1}] \hat U^\dag(\xi) [\hat{\mathds{1}} \otimes \ket{0}\!\bra{0}]\right\}\approx \xi^2\hat a_k\hat\rho\hat a_k^{\dag}$, where the approximation becomes exact when the approximation parameter $\xi$ goes to $0$ (see Appendix \ref{app:photsub}).
Replacing each photon subtraction in Eq.~(\ref{eq:sampling}) by the above construction, we show in Appendix \ref{app:approx} that for any \mbox{$\epsilon>0$}, one can pick approximation parameters $\xi_{k;j}=\text{poly }(\epsilon,\frac1m)$ for all $k\in\{1,\dots,m\}$ and all $j\in\{1,\dots,r^\star(y_k)\}$, such that:
\begin{equation}\label{eq:samplingAux}\begin{split}
   & P(y_1,\dots, y_m \lvert \hat \rho) \\&\!\!=\!\frac{1}{{\cal N}\prod_{k=1}^m\!\prod_{j=1}^{r^\star(y_k)}\xi_{k;j}^2} \tr \left[ \hat \rho_{\rm total} \left(\bigotimes_{k=1}^m\ket{\alpha_k}\!\bra{\alpha_k}\otimes \ket0\!\bra0^{\otimes n}\!\right) \right]\!+\!\mathcal O(\epsilon),
    \end{split}
\end{equation}
where we have set $n:=\sum_{k=1}^mr^\star(y_k)$, and where the state $\hat \rho_{\rm total}$ is defined on the full Hilbert space, including all the auxiliary modes, and is given by
\begin{equation}
   \hat \rho_{\rm total}:=(\hat{\bm S}^{\dag}\otimes\hat{\mathds{1}}_{\rm aux})\hat {\cal U}^{\dag}\left(\hat\rho \otimes \ket1\!\bra1^{\otimes n}\right)\hat{\cal U}(\hat{\bm S}\otimes\hat{\mathds{1}}_{\rm aux}),
\end{equation}
with $\hat {\cal U}$ given by $\hat {\cal U}:=\bigotimes_{k=1}^m\prod_{j=1}^{r^{\star}(y_k)} \hat D(\beta_{k;j}) \hat U^{\dag}(\xi_{k;j}) \hat D^{\dag}(\beta_{k;j})$. We note that $\smash{\hat U(\xi_{k;j})}$ is the two-mode squeezer that connects the $k^{th}$ detection mode to the auxiliary mode that implements the $j^{th}$ photon-subtraction operation associated with it and thus $(\hat{\bm S}^{\dag}\otimes \hat{\mathds{1}}_{\rm aux})\hat {\cal U}^{\dag}$ is a Gaussian unitary. In particular, $r^\star(\hat\rho_{\rm total})=r^\star(\hat\rho\otimes\ket1\!\bra1^{\otimes n})=r^\star(\hat\rho)+\sum_{k=1}^mr^\star(y_k)$ since the stellar rank is fully additive with respect to tensor products with pure states~\cite{chabaud2021holomorphic}.

\begin{figure}
	\begin{center}
		\includegraphics[width=\columnwidth]{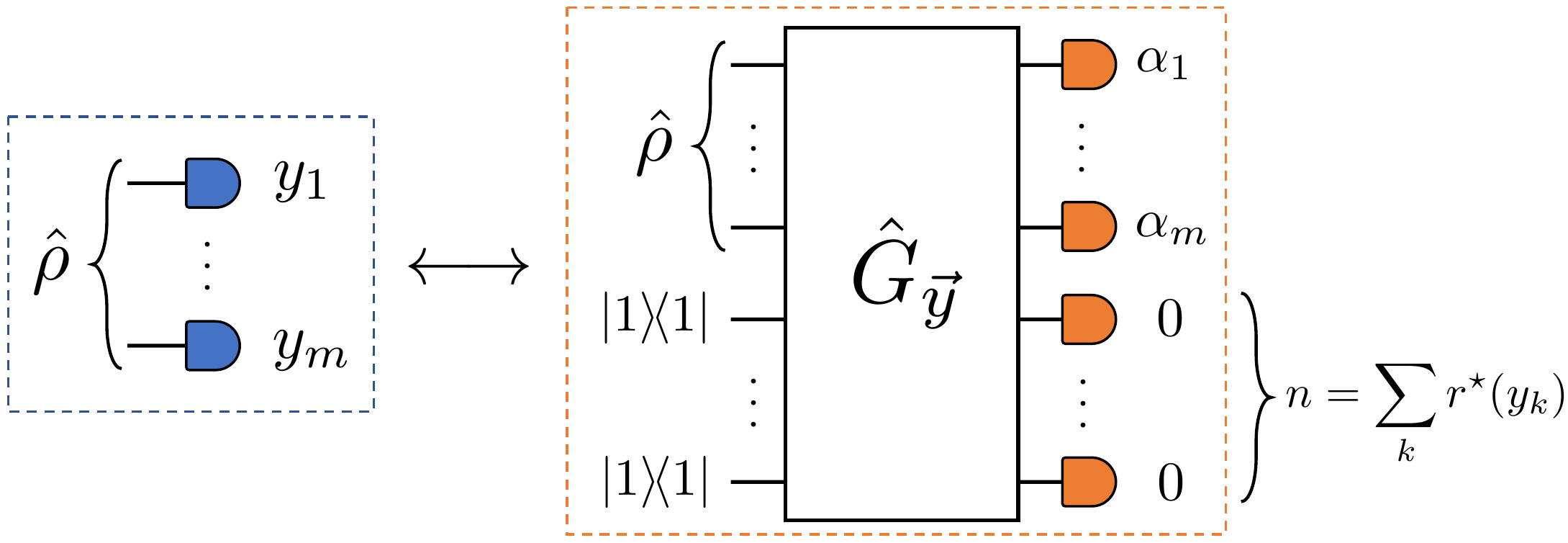}
    		\caption{To any bosonic computation (left, in blue) is associated a coherent state sampling setup (right, in orange) which takes as input the same state $\hat\rho$, together with auxiliary single-photon Fock states, and whose output probability density approximates to arbitrary precision the output probability of a given outcome up to normalisation, i.e.\ $P(\alpha_1,\dots,\alpha_m,0,\dots,0)\approx\frac1{\mathcal N'}P(y_1,\dots,y_m)$. The number of auxiliary Fock states $n$ is the sum of the stellar ranks of the projectors associated with the outcomes $y_1,\dots,y_m$.}
		\label{fig:dual}
	\end{center}
\end{figure}

The projection on the vacuum is consistent with double homodyne detection since $\ket{0}_{vac}$ is also a coherent state. The expression in Eq.~\eqref{eq:samplingAux} thus shows that any setup where one samples a given outcome from a bosonic state can be mapped theoretically to a larger coherent state sampling setup, whose output probability density matches to arbitrary precision the output probability of that outcome, up to a normalising factor (see Fig.~\ref{fig:dual}). Furthermore, the stellar ranks of the projection operators translate to the inclusion of additional single-photon Fock states in auxiliary modes. A similar derivation, detailed in Appendix \ref{app:marginals}, shows that the corresponding marginal probabilities are also reproduced by the marginals probability densities of coherent state samplers.

\textit{Strong simulation of bosonic computations.---}These results highlight that coherent state samplers can be very generally used to simulate other sampling setups using similar techniques as in~\cite{chabaud2020classical}. Strong simulation in particular refers to the evaluation of any output probability of a computation, or any of its marginals probabilities. Hereafter, we rely on the following notion of approximate strong simulation: let $P$ be a probability distribution (density); for $\epsilon>0$, approximate strong simulation of $P$ up to total variation distance $\epsilon$ refers to the computational task of strongly simulating a probability distribution $Q$ which is $\epsilon$-close to $P$ in total variation distance (see Appendix \ref{app:strong} for a formal definition).

The classical algorithm for strong simulation of Gaussian circuits with non-Gaussian input states from~\cite[Theorem 2]{chabaud2020classical} can be readily applied to coherent state samplers. Combining this result with our construction, we obtain a general classical algorithm for approximate strong simulation of bosonic quantum computations whose complexity scales with the stellar rank of both the input state and the measurement setup. We state the result in the case of pure state input and projective measurements and refer to Theorem 2 in Appendix \ref{app:strong} for the general theorem and its proof:

\textit{Theorem 1.---}Let $\ket{\bm\psi}$ be an $m$-mode pure state of stellar rank $r^\star(\bm\psi)$ and core state support size $s$. For all $k\in\{1,\dots,m\}$, let $\hat Y_k$ be an observable with eigenbasis $\{\ket{y_k}\}_{y_k\in\mathcal Y_k}$, and let $r^\star_k=\sup_{y_k\in\mathcal Y_k}r^\star(y_k)$. Let $r:=\smash{r^\star_{\bm\psi}+\sum_k}r^\star_k$ be the total stellar rank of the setup. Then, the measurement of $\hat Y_1,\dots,\hat Y_m$ on $\ket{\bm\psi}$ over an exponentially large outcome space can be approximately strongly simulated up to total variation distance $\mathrm{exp}(-\text{poly }m)$ in time ${\cal O}(s^2r^32^r+\text{poly }m)$.

The total variation distance in the theorem results from the approximation used in Eq.~(\ref{eq:samplingAux}). This strong simulation algorithm competes with state-of-the art classical algorithms for certain bosonic architectures~\cite{chabaud2020classical}, but applies to a much wider class of quantum computations---essentially any bosonic computation. The time complexity in Theorem~1 is a worst-case complexity, based on the fastest known classical algorithm for computing the hafnian~\cite{bjorklund2019faster}, and may be reduced for particular instances. On the other hand, due to its broad applicability, our simulation technique may be outperformed by classical simulation algorithms targeting specific classes of bosonic circuits~\cite{quesada2018gaussian,garcia2019simulating,renema2020simulability,bourassa2021fast,calcluth2022efficient}. Nonetheless, Theorem~1 may be used primarily as a tool for identifying necessary resources for bosonic quantum computational advantage: it establishes the stellar rank as a necessary non-Gaussian property.

\textit{Non-Gaussian entanglement.---}Now that we have shown that any bosonic computation can be connected to a coherent state sampler, we aim to identify physical resources that are required to reach a quantum advantage with coherent state sampling beyond the stellar rank. We resort to a basic model of coherent state sampler, where we consider sampling from a given $N$-mode state $\hat \sigma$. The probability density corresponding to a certain set of complex measurement outcomes $\alpha_1,\dots,\alpha_N$ in the $N$ output detectors is given by the Husimi $Q$-function of the state $\hat \sigma$: $Q(\vec \alpha \lvert \hat \sigma) = \frac{1}{\pi^N}\bra{\vec \alpha} \hat \sigma \ket{\vec \alpha}$, where $\vec\alpha=(\alpha_1,\dots,\alpha_N)^\top$. By having put all the quantum resources of the sampling protocol at the level of the state, the hardness of the sampling problem can now be directly related to properties of the resourceful state's $Q$-function.

Under basic assumptions, we can efficiently sample classically from the $Q$-function of any separable mixed state (see Appendix \ref{app:Qsamp} for a discussion). Hence, quantum entanglement of the input state is a necessary requirement in the design of a coherent state sampler that is hard to simulate. However, it turns out that not all forms of entanglement are equally suitable. In previous works \cite{walschaers_entanglement_2017,chabaud2021holomorphic}, we have discussed the concept of passive separability: a quantum state is said to be passively separable if at least one mode-basis exists in which the state is separable. In other words, for a passively separable state, any entanglement can be undone by an interferometer built with beam-splitters and phase-shifters. 

The concept of passive separability becomes essential when we combine it with the properties of coherent states. Let $\hat U$ describe a passive $N$-mode linear optics interferometer in the sense that $\hat U^\dag \hat a_k \hat U = \sum_{j} U_{jk}\hat a_j$, where $U$ is an $N\times N$ unitary matrix. The action of $\hat U$ on an $N$-mode coherent state is given by $\hat U \ket{\vec \alpha} = \ket{U\vec \alpha}$. This simple identity implies that for all passive linear optics transformations, $Q(\vec \alpha \lvert  \hat \tau) = Q(U \vec \alpha \lvert \hat U \hat \tau\hat U^{\dag})$.
By definition, for any state $\hat \tau$ which is passively separable, there is at least one transformation $\hat U$ such that $\hat U \hat \tau \hat U^{\dag}$ is separable. This, in turn, means that we can efficiently sample from the distribution $Q(\vec \alpha \lvert \hat U \hat \tau \hat U^{\dag})$. Hence, we can sample a vector $\vec \alpha$ from $Q(\vec \alpha \lvert  \hat \tau)$ by first sampling $\vec \beta$ distributed according to $Q(\vec \beta \lvert \hat U \hat \tau \hat U^{\dag})$ and subsequently identifying $\vec \alpha = U^{\dag} \vec \beta$. Thus, we find that we can efficiently simulate the coherent state sampling from any passively separable state.

To reach a quantum computational advantage with a coherent state sampler, we thus have to use input states that are not passively separable. This requirement immediately excludes all Gaussian states, since these are always passively separable \cite{PhysRevA.96.053835}. The lack of passive separability can therefore be seen as non-Gaussian entanglement in the sense that it is a form of entanglement that persists in any mode-basis and cannot be extracted based solely on the state's covariance matrix. It thus highlights the presence of non-Gaussian features in the state's correlations.

We emphasize that there are other intuitive notions of non-Gaussian entanglement. When we call states that are separable through general Gaussian operations (i.e.\ a combination of interferometers and squeezing operations) Gaussian-separable, one could say that only states which are not Gaussian-separable have non-Gaussian entanglement. To understand what notion of non-Gaussian entanglement is necessary for reaching a quantum computational advantage with coherent state sampling, we consider the seminal example of Boson Sampling. Through Eq.~\eqref{eq:samplingAux}, we find that ideal Boson Sampling with $n$ input photons and an $m$-mode interferometer $\hat U_{\rm BS}$ corresponds to coherent state sampling from a state given by $\ket{\Psi} \propto  \hat{\cal U}\left(\hat U_{\rm BS} \otimes \mathds{1}_{\rm aux}\right)\ket{\Psi_{\rm total}}$, where $\hat{\cal U}$ is a tensor product of two-mode squeezers and where the state $\ket{\Psi_{\rm total}}$ is a $2n$-photon Fock state that combines the input state of the boson sampler with $n$ auxiliary photons, given by
\begin{equation}
\!\ket{\Psi_{\rm total}}\!=\!\big[\underbrace{\ket{1}\otimes\dots \otimes\ket{1}}_{n}\otimes \underbrace{\ket{0}\otimes  \dots \otimes \ket{0}}_{m-n}\big] \otimes \big[\underbrace{\ket{1}\otimes\dots \otimes\ket{1}}_{n}\big]_{\rm aux}.  
\end{equation}
Boson Sampling is known to be a hard problem, so exact coherent state sampling from the state $\ket{\Psi}$ is also classically hard~\footnote{When multiplicative estimation or additive estimation up to exponentially small error of a single outcome probability is an instance of a $\#$\textsf{P}-hard problem, which is the case for Boson Sampling, the use of Stockmeyer's algorithm implies that any efficient classical simulation of exact sampling would collapse the polynomial hierarchy of complexity classes~\cite{10.1145/1993636.1993682,hangleiter2022computational}. This argument directly applies to the corresponding instance of coherent state sampling}. The structure of this state nicely highlights the three fundamental types of non-classicality that are required: non-Gaussian resources in $\ket{\Psi_{\rm total}}$, large-scale entanglement through $\hat U_{\rm BS} $, and squeezing through $\hat{\cal U}$. Furthermore, the order of the elements is essential: the state $\ket{\Psi}$ is not passively separable because the squeezing operations in $\hat{\cal U}$ and the non-Gaussian features in $\ket{\Psi_{\rm total}}$ are local in a different mode basis. However, $\hat{\cal U}\left(\hat U_{\rm BS} \otimes \mathds{1}_{\rm aux}\right)$ is a Gaussian operation and $\ket{\Psi_{\rm total}}$ is separable. This means that the state $\ket{\Psi}$ is thus Gaussian-separable but not passive-separable. Hence, there are Gaussian-separable states leading to coherent state sampling that cannot be efficiently simulated. We thus propose to define non-Gaussian entanglement as the type of entanglement that is present in states that are not passively separable. This amounts to defining it operationally as a type of entanglement that is necessary to achieve computationally hard coherent state sampling.

\textit{Conclusion.---}In this work, we argue that any bosonic sampling computation can be mapped to a corresponding coherent state sampling computation. Our construction allows us to derive a general classical algorithm for strong simulation of bosonic computations, whose time complexity scales with the stellar rank of the input state and the measurement setup of the computation.

We see our work in first instance as providing a useful method to analyse the resources in sampling setups because all resources in coherent state sampling are situated at the level of the state. As such, we also find that coherent state sampling with passively separable states can be simulated efficiently. We therefore find that the lack of passive separability rather than the lack of Gaussian separability is the operationally useful type of non-Gaussian entanglement.

Our key reduction in Eq.~\eqref{eq:samplingAux} shows that any non-Gaussian resource in the measurement is introduced in the coherent state sampler through auxiliary photons. The total number of auxiliary photons in the coherent state sampler ultimately corresponds to the total stellar rank of the measurement setup. These photons must be entangled in a fundamentally non-Gaussian way to achieve the necessary sampling complexity. For pure states, this non-Gaussian entanglement also implies one of the previous requirements for reaching a quantum computational advantage: Wigner negativity \cite{PhysRevLett.109.230503}. Yet, for mixed states it remains an open question how the necessity of Wigner negativity translates to the coherent state sampler.

Typical sampling setups such as (Gaussian) Boson Sampling correspond to reasonably simple coherent state samplers that mix local non-Gaussian resources through a multimode Gaussian transformation. However, in the multimode bosonic state space much more exotic states can be conceived. Preparing such states would require multimode non-Gaussian unitary transformations, and it would be interesting to understand whether they have any additional computational resourcefulness.\\

\textit{Acknowledgements.---}We thank Fr\'ed\'eric Grosshans for inspiring discussions. This work was supported by the ANR JCJC project NoRdiC (ANR-21-CE47-0005) and Plan France 2030 project NISQ2LSQ (ANR-22-PETQ-0006). UC acknowledges funding provided by the Institute for Quantum Information and Matter, an NSF Physics Frontiers Center (NSF Grant PHY-1733907).

\bibliography{QuantumAdvantage,biblio}

\onecolumngrid

\newpage

\appendix

\section{Preliminaries}
\label{app:stellar}

\subsection{Multi-index notations}

We use bold math to denote multimode/multivariate notations.
For $m\in\mathbb N^*$, $\bm p=(p_1,\dots,p_m)\in\mathbb N^m$, $\bm q=(q_1,\dots,q_m)\in\mathbb N^m$, $\bm z=(z_1,\dots,z_m)\in\mathbb C^m$, and $\bm w=(w_1,\dots,w_m)\in\mathbb C^m$ we write:
\begin{equation}
    \begin{aligned}
    \bm0&=(0,\dots,0)\\
    |\bm p|&=p_1+\dots+p_m\\
    \bm p!&=p_1!\dots p_m!\\
    \ket{\bm p}&=\ket{p_1}\otimes\dots\otimes\ket{p_m}\\
    \bm p\le\bm q&\Leftrightarrow\forall k\in\{1,\dots,m\},\;p_k\le q_k\\
    \bm z^*&=(z_1^*,\dots,z_m^*)\\
    \|\bm z\|^2&=|z_1|^2+\dots+|z_m|^2\\
    \bm z^{\bm p}&=z_1^{p_1}\dots z_m^{p_m}\\
    d^{2m}\bm z&=d\mathcal R(z_1)d\mathcal I(z_1)\dots d\mathcal R{(z_m)}d\mathcal I(z_m)\\
    \end{aligned}
\end{equation}

\subsection{Stellar formalism}

The stellar formalism has been introduced in~\cite{chabaud2020stellar} to characterize single-mode non-Gaussian quantum states. It has later been generalised to the multimode setting in~\cite{chabaud2020classical,chabaud2021holomorphic}. In this section, we give a brief review of this formalism.

\medskip

The stellar formalism classifies non-Gaussian quantum states using the so-called \textit{stellar function}, which is a representation of a quantum state in an infinite-dimensional Hilbert space by a holomorphic function originally due to Segal~\cite{segal1963mathematical} and Bargmann~\cite{bargmann1961hilbert}.
More precisely, to any $m$-mode quantum state $\ket{\bm\psi}=\sum_{\bm n\ge\bm 0}\psi_{\bm n}\ket{\bm n}$ is associated its stellar function
\begin{equation}
    F^\star_{\bm\psi}(\bm z):=\sum_{\bm n\ge\bm 0}\frac{\psi_{\bm n}}{\sqrt{\bm n!}}\bm z^{\bm n},
\end{equation}
for all $\bm z\in\mathbb C^m$. The stellar function is related to the Husimi $Q$-function of the state as
\begin{equation}
    Q_{\bm\psi}(\bm\alpha)=\frac{e^{-\frac12\|\bm\alpha\|^2}}{\pi^m}\left|F^\star_{\bm\psi}(\bm\alpha^*)\right|^2,
\end{equation}
for all $\bm\alpha\in\mathbb C^m$. 

As it turns out, the stellar function has zeros if and only if the corresponding state is non-Gaussian.
These zeros allow us to rank non-Gaussian states: in the single-mode setting, the \textit{stellar rank} of a pure quantum state is defined as the number of zeros of its stellar function. Equivalently, single-mode pure quantum states of finite stellar rank $r^\star\in\mathbb N$ are those states whose stellar function is of the form $P\times G$, where $P$ is a polynomial of degree $r^\star$ and $G$ is a Gaussian function, while all other states are of infinite stellar rank. 
This equivalent definition allows us to define the stellar rank in the multimode setting, where a characterization in terms of number of zeros is no longer possible: multimode pure quantum states of finite stellar rank $r^\star\in\mathbb N$ are those states whose stellar function is of the form $P\times G$, where $P$ is a \textit{multivariate} polynomial of degree $r^\star$ and $G$ is a \textit{multivariate} Gaussian function, while all other states are of infinite stellar rank.
For a mixed state $\hat\rho$, the stellar rank is defined using a convex roof construction: $\smash{r^\star(\hat\rho)=\inf_{p_i,\bm\psi_i}\sup r^\star_{\bm\psi_i}}$, where the infinimum is over the decompositions $\hat\rho=\smash{\sum_ip_i\ket{\bm\psi_i}\!\bra{\bm\psi_i}}$.

The stellar rank induces a non-Gaussian hierarchy, the \textit{stellar hierarchy}, among quantum states. At rank $0$ lie mixtures of Gaussian states, while quantum non-Gaussian states populate all higher ranks. For example, a multimode Fock state $\ket{\bm n}$ has stellar rank $|\bm n|$, while a single-mode cat state has infinite stellar rank.

Additionally, in the main text we refer to the stellar rank of a measurement setup: we define naturally the stellar rank of an observable $\hat Y$ as being the supremum of the stellar ranks of its eigenvectors.

\medskip

The stellar function, stellar rank and stellar hierarchy possess various remarkable properties~\cite{chabaud2020stellar,chabaud2020certification,chabaud2021holomorphic}:
\begin{enumerate}[label=(\roman*)]
    \item\label{enum:i} The stellar function gives a unique prescription for engineering a pure quantum state from the vacuum by applying a generating function of the creation operators of the modes: $\ket{\bm\psi}=F^\star_{\bm\psi}(\hat a^\dag_1,\dots,\hat a^\dag_m)\ket{\bm0}$.
    \item\label{enum:ii} The form $P\times G$ of the stellar function translates to the structure of the corresponding quantum state: finite stellar rank states have a normal expansion $P(\hat a^\dag_1,\dots,\hat a^\dag_m)\ket G$ where $P$ is a polynomial and $\ket G$ is a Gaussian state. 
    \item\label{enum:iii} A unitary  operation  is  Gaussian  if  and  only  if  it  leaves  the stellar  rank invariant, and the stellar rank is non-increasing under Gaussian channels and measurements. As a result, the stellar rank is a non-Gaussian monotone which provides a measure of the non-Gaussian character of a quantum state. Moreover, the stellar rank is additive with respect to a tensor product with a pure state.
    \item\label{enum:iv} A state of stellar rank $n$ cannot be obtained from the vacuum by using less than $n$ applications of creation operators, together with Gaussian unitary operations. In other words, the stellar rank has an operational interpretation relating to point~\ref{enum:i}, as a lower bound on the minimal number of elementary non-Gaussian operations (applications of a single-mode creation operator, i.e.\ single-photon addition) needed to engineer the state from the vacuum---a notion which may be thought of as a bosonic counterpart to the $T$-count of finite-dimensional quantum circuits~\cite{amy2013meet,beverland2020lower}.
    \item\label{enum:v} The structure from point~\ref{enum:ii} can be reverted, yielding another useful characterisation: finite stellar rank states also have an antinormal expansion $\hat G\ket C$, where $\hat G$ is a Gaussian unitary and $\ket C$ is a core state, i.e.\ a state with finite support over the Fock basis. By extension, the \textit{support size} of a state of finite stellar rank refers to the size of the support of its core state over the Fock basis. 
    \item\label{enum:vi} The stellar hierarchy is robust with respect to the trace distance, i.e.\ every state of a given finite stellar rank only has states of equal or higher rank in its close vicinity. As a consequence, the stellar rank of quantum states can be witnessed experimentally.
    \item\label{enum:vii} States of finite stellar rank form a dense subset of the Hilbert space, so that states of infinite rank can be approximated arbitrarily well in trace distance by states of finite stellar rank. Moreover, the optimal approximation of any fixed rank can be obtained by an optimization over ${\cal O}(m^2)$ parameters, independently of the rank.
\end{enumerate}

\section{Photon subtraction and addition using an auxiliary photon}\label{app:photsub}

In this section we show how to retrieve photon subtraction by using an auxiliary single-photon state and weak two-mode squeezing, together with a projection of the auxiliary mode on the vacuum.

By linearity, we only treat the case of an input pure state $\ket\psi$, i.e.\ we compute the output state of the Completely Positive Trace-Decreasing (CPTD) map
\begin{equation}\label{SM:Exi}
    \mathcal E_\xi(\ket\psi):=\tr_{\rm aux} \left\{\hat U_{\rm TMSS}(\xi) [\ket\psi\!\bra\psi\otimes \ket{1}\!\bra{1}_{\rm aux}]\hat U_{\rm TMSS}(\xi)^\dag [\hat{\mathds{1}}\otimes\ket{0}\!\bra{0}_{\rm aux}]\right\},
\end{equation}
where $\hat U_{\rm TMSS}(\xi)=\exp[\xi\hat a^{\dag}\hat a_{\rm aux}^{\dag}-\xi^*\hat a\hat a_{\rm aux}]$ is the two-mode squeezing operator. Writing $\xi=re^{i\theta}$ and $\mu=\cosh r$, $\nu=e^{i\theta}\sinh r$, the normal ordering of the two-mode squeezing operator is given by~\cite{hong1987new}:
\begin{equation}
    \hat U_{\rm TMSS}(\xi)=e^{\frac\nu\mu\hat a^\dag\hat a_{\rm aux}^\dag}\mu^{-(\hat a^\dag\hat a+\hat a_{\rm aux}^\dag\hat a_{\rm aux}+\hat{\mathds{1}})}e^{-\frac{\nu^*}\mu\hat a\hat a_{\rm aux}}.
\end{equation}
We have
\begin{equation}
    e^{-\frac{\nu^*}\mu\hat a\hat a_{\rm aux}}(\ket\psi\otimes\ket1_{\rm aux})=\ket\psi\otimes\ket1_{\rm aux}-\frac{\nu^*}\mu\hat a\ket\psi\otimes\ket0_{\rm aux},
\end{equation}
by expanding the exponential series (only the first two terms yield a nonzero contribution). Similarly, for all $\ket\chi$ we have
\begin{equation}
    \bra\chi\otimes\bra0_{\rm aux}e^{\frac\nu\mu\hat a^\dag\hat a_{\rm aux}^\dag}\mu^{-(\hat a^\dag\hat a+\hat a_{\rm aux}^\dag\hat a_{\rm aux}+\hat{\mathds{1}})}=\bra\chi\mu^{-(\hat a^\dag\hat a+1)}\otimes\bra0_{\rm aux}.
\end{equation}
Hence,
\begin{equation}\label{SM:devExi}
    \begin{aligned}
        \mathcal E_\xi(\ket\psi)&=\bra0_{\rm aux}\hat U_{\rm TMSS}(\xi)\left(\ket\psi\otimes\ket1_{\rm aux}\right)\\
        &=\bra0_{\rm aux}e^{\frac\nu\mu\hat a^\dag\hat a_{\rm aux}^\dag}\mu^{-(\hat a^\dag\hat a+\hat a_{\rm aux}^\dag\hat a_{\rm aux}+\hat{\mathds{1}})}\cdot e^{-\frac{\nu^*}\mu\hat a\hat a_{\rm aux}}(\ket\psi\otimes\ket1_{\rm aux})\\
        &=\mu^{-(\hat a^\dag\hat a+\hat{\mathds{1}})}\bra0_{\rm aux}\left(\ket\psi\otimes\ket1_{\rm aux}-\frac{\nu^*}\mu\hat a\ket\psi\otimes\ket0_{\rm aux}\right)\\
        &=-\nu^*\mu^{-2}\mu^{-\hat a^\dag\hat a}\hat a\ket\psi\\
        &=\left(-\frac{e^{-i\theta}\sinh r}{\cosh^2 r}\right)(\cosh\xi)^{-\hat a^\dag\hat a}\hat a\ket\psi.
    \end{aligned}
\end{equation}
In the limit $r\rightarrow0$ we retrieve $\ket\phi\!\bra\phi\sim r^2\hat a\ket\psi\!\bra\psi\hat a^\dag$, so the map $\ket\psi\mapsto\left(-\frac{\cosh^2 r}{e^{-i\theta}\sinh r}\right)\mathcal E_\xi(\ket\psi)=(\cosh\xi)^{-\hat a^\dag\hat a}\hat a\ket\psi$ with the removed prefactor approximates a single-photon subtraction. We make this statement more precise in the following section. Note that for $r\neq0$, the state $\ket\phi$ is an attenuated photon-subtracted state. 

\medskip

The same arguments may be used to show that photon addition may be performed by using an auxiliary single-photon state and a beam splitter operation $\hat U_{\rm BS}(\zeta)=e^{\zeta\hat a^\dag\hat a_{\rm aux}-\zeta^*\hat a\hat a_{\rm aux}^\dag}=e^{-e^{i\delta}\tan\gamma\hat a\hat a^\dag_{\rm aux}}(\cos^2\gamma)^{\hat a^\dag\hat a-\hat a_{\rm aux}^\dag\hat a_{\rm aux}}e^{e^{i\delta}\tan\gamma\hat a^\dag\hat a_{\rm aux}}$~\cite{ferraro2005gaussian}, with $\zeta=\gamma e^{i\delta}$, together with a projection of the auxiliary mode on the vacuum. The state obtained is an attenuated photon-added state:
\begin{equation}
    \left(e^{i\delta}\tan\gamma\right)(\cos\gamma)^{2\hat a^\dag\hat a}\hat a^\dag\ket\psi.
\end{equation}

\medskip

For completeness, we consider the case of finite-resolution postselection, where the auxiliary output is measured with double homodyne detection, yielding a complex outcome $\alpha\in\mathbb C$ (with possibly $\alpha\neq0$). 
For all $\ket\chi$ we have
\begin{equation}
    \begin{aligned}
        \bra\chi\otimes\bra\alpha_{\rm aux}e^{\frac\nu\mu\hat a^\dag\hat a_{\rm aux}^\dag}\mu^{-(\hat a^\dag\hat a+\hat a_{\rm aux}^\dag\hat a_{\rm aux}+\hat{\mathds{1}})}&=\bra\chi e^{\frac\nu\mu\alpha^*\hat a^\dag}\mu^{-(\hat a^\dag\hat a+\hat{\mathds{1}})}\otimes\bra\alpha_{\rm aux}\mu^{-\hat a_{\rm aux}^\dag\hat a_{\rm aux}}\\
        &=\bra\chi e^{\frac\nu\mu\alpha^*\hat a^\dag}\mu^{-(\hat a^\dag\hat a+\hat{\mathds{1}})}\otimes e^{-\frac12|\alpha|^2}e^{\frac12|\alpha/\mu|^2}\bra{\alpha/\mu}_{\rm aux}.
    \end{aligned}
\end{equation}
For the two-mode squeezing gadget we obtain:
\begin{equation}
    \begin{aligned}
        \bra\alpha_{\rm aux}\hat U_{\rm TMSS}(\xi)\left(\ket\psi\otimes\ket1_{\rm aux}\right)&=\bra\alpha_{\rm aux}e^{\frac\nu\mu\hat a^\dag\hat a_{\rm aux}^\dag}\mu^{-(\hat a^\dag\hat a+\hat a_{\rm aux}^\dag\hat a_{\rm aux}+\hat{\mathds{1}})}\cdot e^{-\frac{\nu^*}\mu\hat a\hat a_{\rm aux}}(\ket\psi\otimes\ket1_{\rm aux})\\
        &=e^{\frac\nu\mu\alpha^*\hat a^\dag}\mu^{-(\hat a^\dag\hat a+\hat{\mathds{1}})} e^{-\frac12|\alpha|^2}e^{\frac12|\alpha/\mu|^2}\bra{\alpha/\mu}_{\rm aux}\left(\ket\psi\otimes\ket1_{\rm aux}-\frac{\nu^*}\mu\hat a\ket\psi\otimes\ket0_{\rm aux}\right)\\
        &=\frac1{\mu^2}e^{-\frac12|\alpha|^2}e^{\frac\nu\mu\alpha^*\hat a^\dag}\mu^{-\hat a^\dag\hat a} \left(\alpha^*\hat{\mathds1}-\nu^*\hat a\right)\ket\psi.
    \end{aligned}
\end{equation}
We retrieve a photon subtraction for $|\xi|\rightarrow0$ and $|\alpha|=o(|\xi|)$.
Similarly, for the beam splitter gadget we obtain:
\begin{equation}
    \frac1{\cos^2\gamma}e^{-\frac12|\alpha|^2}e^{-e^{i\delta}\tan\gamma\alpha^*\hat a}(\cos\gamma)^{2\hat a^\dag\hat a}\left(\alpha^*\hat{\mathds 1}+e^{i\delta}\cos\gamma\sin\gamma\hat a^\dag\right)\ket\psi.
\end{equation}
We retrieve a photon addition for $|\zeta|\rightarrow0$ and $|\alpha|=o(|\zeta|)$.

\section{Approximate probability distribution}
 \label{app:approx}

In this section, we analyze the error introduced at the level of the probability distribution by using the auxiliary photon gadgets from the previous section instead of the photon subtractions, thus justifying Eq.~(3) in the main text. 

\medskip

Hereafter, we denote the vector $2$-norm of a pure state $\ket\psi$ by $\|\ket\psi\|=\sqrt{\langle\psi|\psi\rangle}$, and the trace distance between two pure states $\ket\phi$ and $\ket\psi$ is given by
\begin{equation}
    D(\ket\phi,\ket\psi)=\frac12\tr(|\ket\phi\!\bra\phi-\ket\psi\!\bra\psi|)=\frac12\|\ket\phi\!\bra\phi-\ket\psi\!\bra\psi|\|_1,
\end{equation}
where $\|\cdot\|_1$ is the operator $1$-norm (note that $\|\ket\psi\!\bra\psi\|_1=\|\ket\psi\|^2$).

Recalling the notations of the main text, the output projectors $\hat P_{k;y_k}$ correspond to finite stellar rank pure states $\ket{y_k}$ of the form
\begin{equation}\label{SM:defyk}
    \ket{y_k}=\frac1{\sqrt{{\cal N}_k}} \left[\prod_{j=1}^{r^\star(y_k)} \hat D(\beta_{k;j}) \hat a^{\dag}_k \hat D^{\dag}(\beta_{k;j})\right]\hat S_k\ket{\alpha_k},
\end{equation}
where $\smash{r^\star(y_k)}\in\mathbb N$ denotes the stellar rank of the state $\ket{y_k}$, $\hat D(\beta_{k;j})$ is a displacement operator that acts on mode $k$ with $\beta_{k;j}\in\mathbb C$, $\smash{\hat a^{\dag}_k}$ is the creation operator in mode $k$, $\hat S_k$ is a squeezing operator, $\ket{\alpha_k}$ is a coherent state, and ${\cal N}_k$ is a normalisation factor, for all $k\in\{1,\dots,m\}$ and all $j\in\{1,\dots,r^\star(y_k)\}$.
The output probability distribution is then given by
\begin{equation}\label{SM:outputprobaexpand}
    \begin{aligned}
        P(y_1,\dots, y_m \lvert\hat \rho)&=\tr\left[\hat\rho\bigotimes_{k=1}^m\ket{y_k}\!\bra{y_k}\right]\\
        &=\frac1{\cal N}\tr\left[\hat\rho\left(\bigotimes_{k=1}^m\prod_{j=1}^{r^\star(y_k)}\hat D(\beta_{k;j})\hat a_k^\dag\hat D^{\dag}(\beta_{k;j})\hat S_k\right)\bigotimes_{k=1}^m\ket{\alpha_k}\!\bra{\alpha_k}\left(\bigotimes_{k=1}^m\hat S_k^\dag\prod_{j=1}^{r^\star(y_k)}\hat D(\beta_{k;j})\hat a_k\hat D^{\dag}(\beta_{k;j})\right)\right]\\
        &=\frac1{\cal N}\tr\left[\left(\bigotimes_{k=1}^m\hat S_k^\dag\prod_{j=1}^{r^\star(y_k)}\hat D(\beta_{k;j})\hat a_k\hat D^{\dag}(\beta_{k;j})\right)\hat\rho\left(\bigotimes_{k=1}^m\prod_{j=1}^{r^\star(y_k)}\hat D(\beta_{k;j})\hat a_k^\dag\hat D^{\dag}(\beta_{k;j})\hat S_k\right)\bigotimes_{k=1}^m\ket{\alpha_k}\!\bra{\alpha_k}\right],
    \end{aligned}
\end{equation}
where ${\cal N}=\prod_{k=1}^m{\cal N}_k$ is a normalisation factor. 

The probability distribution obtained by replacing each photon subtraction $\hat a$ in the above expression by $(\cosh\xi)^{-\hat a^\dag\hat a}\hat a$ for some $\xi\in\mathbb R$ (which by Eq.~(\ref{SM:devExi}) is the map $\mathcal E_\xi$ with the prefactor $-\frac{\sinh \xi}{\cosh^2 \xi}$ removed) reads
\begin{equation}\label{SM:tildePxi}
    \begin{aligned}
        \tilde P_{\bm\xi}(y_1,\dots, y_m \lvert\hat \rho)&=\frac1{\cal N}\tr\left[\left(\bigotimes_{k=1}^m\hat S_k^\dag\prod_{j=1}^{r^\star(y_k)}\hat D(\beta_{k;j})(\cosh\xi_{k;j})^{-\hat a^\dag\hat a}\hat a_k\hat D^{\dag}(\beta_{k;j})\right)\hat\rho\left(\bigotimes_{k=1}^m\prod_{j=1}^{r^\star(y_k)}\hat D(\beta_{k;j})\hat a_k^\dag(\cosh\xi_{k;j})^{-\hat a^\dag\hat a}\hat D^{\dag}(\beta_{k;j})\hat S_k\right)\bigotimes_{k=1}^m\ket{\alpha_k}\!\bra{\alpha_k}\right]\\
        &=\frac1{\cal N}\prod_{k=1}^m\prod_{j=1}^{r^\star(y_k)}\left(\frac{\cosh^2(\xi_{k;j})}{\sinh (\xi_{k;j})}\right)^2\!\!\tr\Bigg[\left(\bigotimes_{k=1}^m\hat S_k^\dag\prod_{j=1}^{r^\star(y_k)}\hat D(\beta_{k;j})\hat U_{\rm TMSS}(\xi_{k;j})\hat D^{\dag}(\beta_{k;j})\right)\left(\hat\rho\otimes\ket1\!\bra1^{\otimes n}\right)\\
        &\quad\quad\quad\quad\quad\quad\quad\quad\quad\quad\quad\quad\quad\quad\times\left(\bigotimes_{k=1}^m\prod_{j=1}^{r^\star(y_k)}\hat D(\beta_{k;j})\hat U_{\rm TMSS}(\xi_{k;j})^\dag\hat D^{\dag}(\beta_{k;j})\hat S_k\right)\left(\bigotimes_{k=1}^m\ket{\alpha_k}\!\bra{\alpha_k}\otimes\ket0\!\bra0^{\otimes n}\right)\Bigg],
    \end{aligned}
\end{equation}
where we used Eq.~(\ref{SM:Exi}) in the last line, where for all $k\in\{1,\dots,m\}$ and all $j\in\{1,\dots,r^\star(y_k)\}$, $\xi_{k;j}\in\mathbb R$ and $\hat U_{\rm TMSS}(\xi_{k;j})$ is the two-mode squeezer that implements the $j^{th}$ photon subtraction for the $k^{th}$ mode~\footnote{Note that the elements of the product no longer commute. We assume that the ordering is fixed.}, and where we have set $n:=\sum_{k=1}^mr^\star(y_k)$.

The main result of this section is the following:

\begin{theorem}\label{th:approx}
Let $\epsilon>0$. For all $k\in\{1,\dots,m\}$ and all $j\in\{1,\dots,r^\star(y_k)\}$, there exist polynomials which depend only on the states $\ket{y_k}$ such that $\xi_{k;j}=\mathrm{poly}\left(\epsilon,\frac1m\right)$ and
\begin{equation}
    \left|P(y_1,\dots, y_m \lvert\hat \rho)-\tilde P_{\bm\xi}(y_1,\dots, y_m \lvert\hat \rho)\right|\le\epsilon.
\end{equation}
\end{theorem}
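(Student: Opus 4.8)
The plan is to bound the difference of the two probabilities by the operator-norm distance between the corresponding measurement operators, and then to reduce this distance, through two successive telescopings, to a sum of single-photon-subtraction errors, each of which is controlled by a photon-number moment of a Gaussian-based state. To begin, I would write both quantities as expectations of a positive operator: from \eqref{SM:defyk} and the definition of $\tilde P_{\bm\xi}$ in \eqref{SM:tildePxi} one has $P(y_1,\dots,y_m\lvert\hat\rho)=\tr[\hat\rho\,\hat\Pi]$ and $\tilde P_{\bm\xi}(y_1,\dots,y_m\lvert\hat\rho)=\tr[\hat\rho\,\tilde{\hat\Pi}]$, with $\hat\Pi=\bigotimes_{k=1}^m\ket{y_k}\!\bra{y_k}$ and $\tilde{\hat\Pi}=\bigotimes_{k=1}^m\ket{\tilde y_k}\!\bra{\tilde y_k}$, where $\ket{\tilde y_k}$ is the (generally unnormalised) vector obtained from $\ket{y_k}$ by inserting the Fock-diagonal contraction $(\cosh\xi_{k;j})^{-\hat a^\dag\hat a}$ immediately to the right of each creation operator $\hat a_k^\dag$. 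Since $\hat\rho$ is a density operator, $|P-\tilde P_{\bm\xi}|=|\tr[\hat\rho\,(\hat\Pi-\tilde{\hat\Pi})]|\le\|\hat\Pi-\tilde{\hat\Pi}\|_\infty$, so it suffices to make the two measurement operators close in operator norm, uniformly in $\hat\rho$; this is precisely what will make the required thresholds on $\xi_{k;j}$ depend only on the projector states $\ket{y_k}$.

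Next I would pass from the operator norm to vector norms by a double telescoping. Using $\|\ket u\!\bra u-\ket v\!\bra v\|_\infty\le\|\ket u-\ket v\|\,(\|\ket u\|+\|\ket v\|)$ together with a hybrid expansion over the $m$ tensor factors, I would bound $\|\hat\Pi-\tilde{\hat\Pi}\|_\infty$ by $\sum_{k=1}^m\|\ket{y_k}-\ket{\tilde y_k}\|$ times a product of the individual norms; since each $\ket{y_k}$ is normalised and each $\ket{\tilde y_k}$ has norm $1+\mathcal O(\xi^2)$ (the inserted factors being contractions), these prefactors are bounded by a constant. A second telescoping, now over the $r^\star(y_k)$ insertions inside a single mode $k$, reduces $\|\ket{y_k}-\ket{\tilde y_k}\|$ to a sum over $j$ of single-insertion errors, each of the form $\big\|\,\hat a_k^\dag\cdots\big(\hat{\mathds{1}}-(\cosh\xi_{k;j})^{-\hat a^\dag\hat a}\big)\ket{\chi_{k;j}}\,\big\|$, where $\ket{\chi_{k;j}}$ is an intermediate displaced, attenuated, photon-added Gaussian state and the dots denote the creation operators and (unitary) displacements sitting to its left in the operator product.

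The quantitative heart of the argument is the single-insertion bound. Since $\cosh\xi\le e^{\xi^2/2}$ one has $\ln\cosh\xi\le\xi^2/2$, and hence, using $1-e^{-x}\le x$, $1-(\cosh\xi)^{-n}=1-e^{-n\ln\cosh\xi}\le \frac12 n\xi^2$ for every $n\in\mathbb N$. Expanding $\ket{\chi_{k;j}}$ in the Fock basis then yields $\|(\hat{\mathds{1}}-(\cosh\xi_{k;j})^{-\hat a^\dag\hat a})\ket{\chi_{k;j}}\|\le\frac12\xi_{k;j}^2\sqrt{\langle\chi_{k;j}|(\hat a^\dag\hat a)^2|\chi_{k;j}\rangle}$. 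The creation operators and displacements standing to the left are then absorbed one by one using $\|\hat a^\dag\ket v\|^2=\langle v|(\hat a^\dag\hat a+1)|v\rangle$ and the unitarity of $\hat D(\beta)$, which raise the estimate only by higher photon-number moments of the error vector.

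The main obstacle — and the step needing the most care — is to control all of these moments uniformly, so that the bound is finite and independent of $\hat\rho$. I would argue that every intermediate vector is produced from the single-mode Gaussian state $\hat S_k\ket{\alpha_k}$, which has finite photon-number moments of all orders computable from $\hat S_k$ and $\alpha_k$, by applying at most $r^\star(y_k)$ creation operators, displacements (unitary), and Fock-diagonal contractions (which do not increase any moment); since only finitely many creation operators act, all the moments appearing above stay finite and are bounded by constants $C_{k;j}$ depending only on the parameters of $\ket{y_k}$. Collecting the two telescopings gives $|P-\tilde P_{\bm\xi}|\le\sum_{k=1}^m\sum_{j=1}^{r^\star(y_k)}C_{k;j}\,\xi_{k;j}^2$. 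As $n=\sum_k r^\star(y_k)=\mathcal O(m)$, choosing every $\xi_{k;j}$ polynomially small in $\epsilon$ and $1/m$ — for instance any $\xi_{k;j}\le\sqrt{\epsilon/(n\max_{k,j}C_{k;j})}$ — forces the right-hand side below $\epsilon$, establishing the claim with thresholds determined solely by the states $\ket{y_k}$.
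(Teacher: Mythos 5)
Your proposal is correct in substance, but it takes a genuinely different route from the paper's proof. The paper also reduces the problem to the distance between $\bigotimes_k\ket{y_k}$ and $\bigotimes_k\ket{\tilde y_k}$ and telescopes over modes, but at the single-mode level it proceeds by a trace-distance recurrence: each attenuated photon addition is viewed as the renormalised CPTD map $\bigl(-\tfrac{\cosh^2\xi}{\sinh\xi}\bigr)\mathcal E^*_\xi$, a moment lemma gives a per-insertion error of order $C\sqrt{\xi_j}$ (with $C$ built from the first two photon-number moments $E,M$ of the normalised intermediate states), and monotonicity of trace distance under $\mathcal E^*_{\xi_j}$ followed by renormalisation produces an amplification factor $\sim 2/\bigl(\xi_j^2(E_{j-1}+1)\bigr)$ on the accumulated error. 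This forces the paper into the nested recursive choice of the $\xi_{k;j}$ in Eq.~(\ref{xikjeps}), with earlier parameters polynomially smaller than later ones. Your direct vector-norm telescoping avoids this amplification entirely: applying $\hat a^\dag(\cosh\xi)^{-\hat a^\dag\hat a}$ to the \emph{difference vector} incurs no $1/\xi^2$ penalty, so the single-insertion errors of order $\tfrac12\xi_{k;j}^2$ (via $1-(\cosh\xi)^{-n}\le n\xi^2/2$) simply add, yielding $|P-\tilde P_{\bm\xi}|\le\sum_{k,j}C_{k;j}\xi_{k;j}^2$ and a uniform, non-recursive parameter choice with a sharper $\epsilon$-dependence. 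The price you pay is in the moment bookkeeping: absorbing the creation operators and displacements sitting to the left of the error vector requires photon-number moments of the intermediate states up to order growing with $r^\star(y_k)$ (via a bound such as $\hat T^\dag\hat T\le c(\hat a^\dag\hat a+\hat{\mathds1})^{\deg \hat T}$ for $\hat T$ a product of displacements and creation operators), whereas the paper only ever needs $E$ and $M$; your moments remain finite and, since the contractions only decrease them, are bounded uniformly in $\bm\xi$ by their exact ($\xi=0$) counterparts, so the constants do depend only on the $\ket{y_k}$ as the theorem requires. Two cosmetic points: the norm of $\ket{\tilde y_k}$ is in fact $\le 1$ (the contractions can only reduce the norm relative to the exact normalisation $\mathcal N_k$), which is all your hybrid expansion needs; and your threshold $\xi_{k;j}\le\sqrt{\epsilon/(n\max_{k,j}C_{k;j})}$ should be replaced by, say, $\xi_{k;j}=\epsilon/(n\max_{k,j}C_{k;j})$ to be literally a polynomial in $(\epsilon,1/m)$ as the statement demands — since your bound is monotone in the $\xi$'s this costs nothing.
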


\noindent We defer the proof of the theorem to the end of this section. This theorem implies that we can obtain arbitrarily good additive approximations of any individual output probability $P$ by computing $\tilde P_{\bm\xi}$ instead with the parameters $\xi_{k;j}$ going to $0$. In a practical computation however, we would be limited to working with numbers of polynomially many bits (in $m$), i.e.\ exponentially small values for $\xi_{k;j}$, and thus exponentially small values for $\epsilon$, i.e.\ $\epsilon=\mathcal O\left(\mathrm{exp}(-\mathrm{poly}\;m)\right)$. By Theorem~\ref{th:approx}, we have
\begin{equation}
    \sum_{\bm y=(y_1,\dots,y_m)}\left|P(y_1,\dots, y_m \lvert\hat \rho)-\tilde P_{\bm\xi}(y_1,\dots, y_m \lvert\hat \rho)\right|\le|\mathcal Y|\epsilon,
\end{equation}
where $\mathcal Y$ is the outcome space.
This implies that we can reproduce the output probability distribution $P$ over any exponentially sized outcome space up to any inverse-exponential error in total variation distance.

\medskip

\noindent Moreover, the scaling in Theorem~\ref{th:approx} is quite loose and the approximation works very well in practice as shown, for example, in Fig.~\ref{fig:simu}. To generate this figure, we used our protocol to construct the distributions $\tilde P_{\bm\xi}(y_1,\dots, y_m \lvert\hat \rho)$ (denoted $P_{\rm estimate}$ in the figure) that estimate Boson Sampling and Gaussian Boson Sampling probabilities. For both protocols, the exact probabilities are indicated by $P_{\rm exact}$. For the case of Boson Sampling the exact probability was calculated using the permanent \cite{10.1145/1993636.1993682} and for Gaussian Boson Sampling we use the hafnian as described in \cite{PhysRevLett.119.170501}. For the estimated probability we use the protocol presented in \cite{chabaud2020classical}, which is also based on hafnians. Each curve presented in Fig.~\ref{fig:simu} is based on 40 numerically calculated points, each for a randomly chosen interferometer and one randomly chosen output configuration with at most one detected photon per output mode. Because we have are most one photon in each output mode, we can simply set $\xi_{k;j} = \xi$ for all $k$ and $j$. Note that we present the multiplicative error $\abs{P_{\rm estimate} - P_{\rm exact}}/P_{\rm exact}$ to correct for potential large fluctuations in the exact probabilities. These results thus demonstrate the success of our method. The practical implementation also highlights that our method is highly inefficient as a computational tool, emphasising it as a worst case scenario and a theoretical resource identification tool. 

\begin{figure}
	\begin{center}
		\includegraphics[width=\columnwidth]{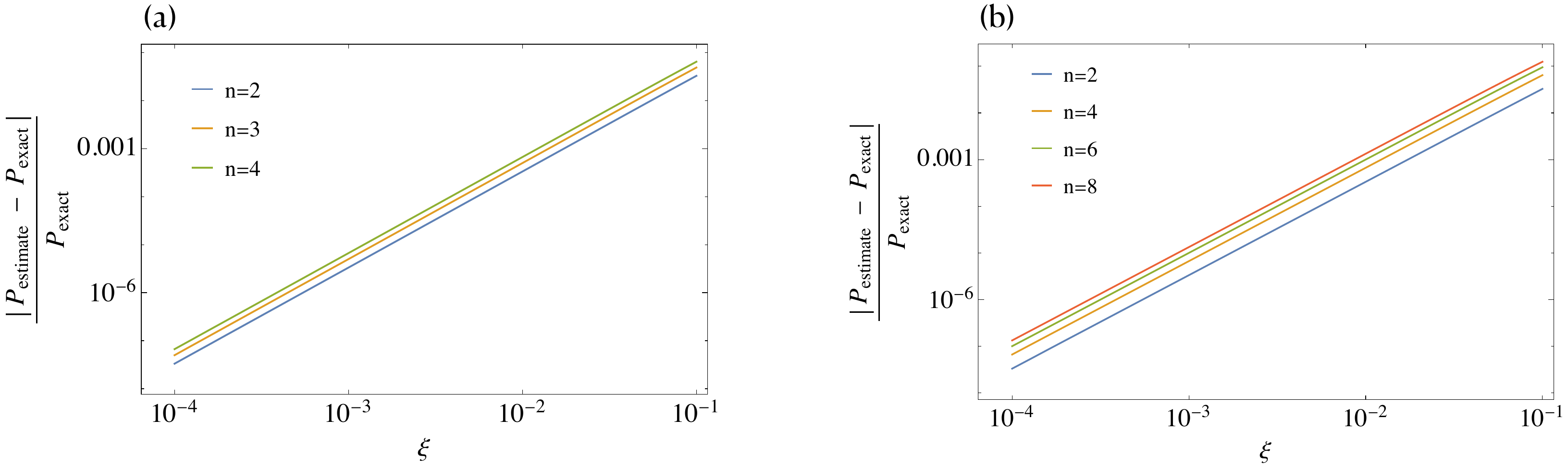}
    		\caption{Quality of probability estimates for different values of the approximation parameter $\xi$ and for random (a) Boson Sampling and (b) Gaussian Boson Sampling computations. Each curve is obtained from 40 numerically calculated points, each for a different randomly chosen output configuration of a randomly chosen interferometer.}
		\label{fig:simu}
	\end{center}
\end{figure}

\medskip 
 
\noindent The rest of this section is devoted to the proof of Theorem~\ref{th:approx}. At a high level, the proof proceeds as follows:
\begin{itemize}
    \item We first reduce the problem to proving an upper bound on the trace distance between the $m$-mode states $\bigotimes_{k=1}^m\ket{y_k}$ and $\bigotimes_{k=1}^m\ket{\tilde y_k}$ (see Eq.~(\ref{SM:dist1})), where $\ket{\tilde y_k}$ is the state obtained from $\ket{y_k}$ by replacing all $r^\star(y_k)$ photon additions $\hat a^\dag$ by attenuated photon additions $\hat a^\dag(\cosh\xi_{k;j})^{-\hat a^\dag\hat a}$, for some small $\xi_{k;j}\in\mathbb R$, for all $j\in\{1,\dots,r^\star(y_k)\}$ and all $k\in\{1,\dots,m\}$ (see Eq.~(\ref{SM:deftildeyk})).
    \item We further reduce the problem to proving an upper bound on the trace distance  between the single-mode states $\ket{y_k}$ and $\ket{\tilde y_k}$ for all $k\in\{1,\dots,m\}$ (see Eq.~(\ref{SM:disttensorprod})).
    \item This single-mode bound is obtained by induction, by first bounding the error introduced by replacing a single photon addition by an attenuated one (Lemma~\ref{lem:momentbound}), and then exploiting a recurrence relation for the error (see Eq.~(\ref{SM:recurrence})).
    \item Additional technicalities arise (such as Lemma~\ref{lem:TDnorm}) because some of the states considered (e.g., $\ket{\tilde y_k}$) are not normalised.
\end{itemize}

\begin{proof}
Replacing each photon subtraction in the third line of Eq.~(\ref{SM:outputprobaexpand}) by a renormalised map $\left(-\frac{\cosh^2 \xi}{\sinh \xi}\right)\mathcal E_\xi$ for some $\xi\in\mathbb R$ is equivalent to replacing each photon addition in the second line of Eq.~(\ref{SM:outputprobaexpand}) by its adjoint map
\begin{equation}\label{SM:Exiadj}
    \mathcal E^*_\xi(\ket\phi):=\tr_{\rm aux} \left\{\hat U_{\rm TMSS}(\xi)^\dag [\ket\phi\!\bra\phi\otimes \ket0\!\bra0_{\rm aux}] \hat U_{\rm TMSS}(\xi) [\hat{\mathds{1}}\otimes\ket1\!\bra1_{\rm aux}]\right\},
\end{equation}
and renormalising by $\left(-\frac{\cosh^2 \xi}{\sinh \xi}\right)$. For $\xi\in\mathbb R$, the adjoint map satisfies (see Eq.~(\ref{SM:devExi}))
\begin{equation}
    \mathcal E^*_\xi(\ket\phi)=\left(-\frac{\sinh\xi}{\cosh^2\xi}\right)\hat a^\dag(\cosh\xi)^{-\hat a^\dag\hat a}\ket\phi.
\end{equation}
We define, for all $k\in\{1,\dots,m\}$, the unnormalised states
\begin{equation}\label{SM:deftildeyk}
    \ket{\tilde y_k}:=\frac1{\sqrt{{\cal N}_k}} \left[\prod_{j=1}^{r^\star(y_k)} \hat D(\beta_{k;j})\left[\hat a^{\dag}_k(\cosh\xi_{k;j})^{-\hat a^\dag\hat a}\right]\hat D^{\dag}(\beta_{k;j})\right]\hat S_k\ket{\alpha_k},
\end{equation}
where $\xi_{k;j}\in\mathbb R$ for all $j\in\{1,\dots,r^\star(y_k)\}$. These states are obtained by replacing each photon addition $\hat a^\dag$ in Eq.~(\ref{SM:defyk}) by $\hat a^\dag(\cosh\xi_{k;j})^{-\hat a^\dag\hat a}$ for some $\xi_{k;j}\in\mathbb R$ (i.e.\ by the action of a renormalised map $\left(-\frac{\cosh^2 \xi_{k;j}}{\sinh \xi_{k;j}}\right)\mathcal E^*_\xi$).
With Eq.~(\ref{SM:tildePxi}), we thus obtain
\begin{equation}\label{SM:tildePxiwithtildeyk}
    \tilde P_{\bm\xi}(y_1,\dots, y_m \lvert\hat \rho)=\tr\left[\hat\rho\bigotimes_{k=1}^m\ket{\tilde y_k}\!\bra{\tilde y_k}\right].
\end{equation}
Hence, 
\begin{equation}\label{SM:dist1}
    \begin{aligned}
    \left|P(y_1,\dots, y_m \lvert\hat \rho)-\tilde P_{\bm\xi}(y_1,\dots, y_m \lvert\hat \rho)\right|&=\left|\tr\left[\hat\rho\bigotimes_{k=1}^m\ket{y_k}\!\bra{y_k}\right]-\tr\left[\hat\rho\bigotimes_{k=1}^m\ket{\tilde y_k}\!\bra{\tilde y_k}\right]\right|\\
    &\le\tr\left[\hat\rho\left|\bigotimes_{k=1}^m\ket{y_k}\!\bra{y_k}-\bigotimes_{k=1}^m\ket{\tilde y_k}\!\bra{\tilde y_k}\right|\right]\\
    &\le\left\|\bigotimes_{k=1}^m\ket{y_k}\!\bra{y_k}-\bigotimes_{k=1}^m\ket{\tilde y_k}\!\bra{\tilde y_k}\right\|_1\\
    &=2D\left(\bigotimes_{k=1}^m\ket{y_k},\bigotimes_{k=1}^m\ket{\tilde y_k}\right),
    \end{aligned}
\end{equation}
where we used the sub-multiplicativity of the $1$-norm and $\|\hat\rho\|_1=\tr(\hat\rho)=1$ in the third line.

It is thus sufficient to prove an upper bound $\frac\epsilon2$ on the trace distance between the states $\bigotimes_{k=1}^m\ket{y_k}$ and $\bigotimes_{k=1}^m\ket{\tilde y_k}$ in order to prove the theorem. To achieve this, we introduce the following intermediate results:

\begin{lemma}\label{lem:TDnorm}
Let $\ket\psi$ be a normalised pure state and let $|\tilde\psi\rangle$ be a pure state, possibly unnormalised. Then, writing $\left\||\tilde\psi\rangle\right\|^2=\big\langle\tilde\psi|\tilde\psi\big\rangle$,
\begin{equation}
    D\left(\ket\psi,|\tilde\psi\rangle\right)=\sqrt{\frac14\left(1+\left\||\tilde\psi\rangle\right\|^2\right)^2-\left|\bra\psi\tilde\psi\rangle\right|^2}.
\end{equation}
In particular, $\left\||\tilde\psi\rangle\right\|^2\le1$ implies $D\left(\ket\psi,|\tilde\psi\rangle\right)\le\sqrt{1-\left|\bra\psi\tilde\psi\rangle\right|^2}$.
\end{lemma}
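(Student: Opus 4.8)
The plan is to compute the trace distance between a normalized pure state $\ket\psi$ and a possibly-unnormalized pure state $|\tilde\psi\rangle$ by diagonalizing the Hermitian operator $M := \ket\psi\!\bra\psi - |\tilde\psi\rangle\!\langle\tilde\psi|$ explicitly. The key observation is that $M$ is supported on the (at most) two-dimensional subspace $V = \mathrm{span}\{\ket\psi, |\tilde\psi\rangle\}$, so outside $V$ it acts as zero, and we only need the eigenvalues of the restriction $M|_V$.

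First I would set up an orthonormal basis of $V$. Take $\ket\psi$ as the first basis vector (it is already normalized) and write $|\tilde\psi\rangle = c\ket\psi + d\ket{\psi^\perp}$, where $c = \bra\psi\tilde\psi\rangle$ and $\ket{\psi^\perp}$ is a unit vector orthogonal to $\ket\psi$; then $|d|^2 = \left\||\tilde\psi\rangle\right\|^2 - |c|^2$. In this basis, $\ket\psi\!\bra\psi$ is $\mathrm{diag}(1,0)$, while $|\tilde\psi\rangle\!\langle\tilde\psi|$ is the rank-one matrix $\begin{pmatrix} |c|^2 & cd^* \\ c^*d & |d|^2 \end{pmatrix}$. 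Subtracting gives the $2\times 2$ Hermitian matrix $M|_V = \begin{pmatrix} 1-|c|^2 & -cd^* \\ -c^*d & -|d|^2 \end{pmatrix}$.

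Next I would extract the eigenvalues $\lambda_\pm$ of this $2\times2$ matrix. The trace is $\mathrm{tr}(M|_V) = 1 - |c|^2 - |d|^2 = 1 - \left\||\tilde\psi\rangle\right\|^2$ and the determinant is $\det(M|_V) = -|d|^2(1-|c|^2) - |c|^2|d|^2 = -|d|^2 = |c|^2 - \left\||\tilde\psi\rangle\right\|^2$. Since $M|_V$ is Hermitian with a negative determinant (generically), its two eigenvalues $\lambda_\pm$ have opposite signs, so $|\lambda_+| + |\lambda_-| = \sqrt{(\lambda_+ - \lambda_-)^2} = \sqrt{(\mathrm{tr})^2 - 4\det}$. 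The trace distance is then $D = \tfrac12(|\lambda_+| + |\lambda_-|) = \tfrac12\sqrt{(1 - \left\||\tilde\psi\rangle\right\|^2)^2 + 4(\left\||\tilde\psi\rangle\right\|^2 - |c|^2)}$. Expanding the expression under the root, $(1 - \left\||\tilde\psi\rangle\right\|^2)^2 + 4\left\||\tilde\psi\rangle\right\|^2 - 4|c|^2 = (1 + \left\||\tilde\psi\rangle\right\|^2)^2 - 4|c|^2$, which yields exactly $D = \sqrt{\tfrac14(1 + \left\||\tilde\psi\rangle\right\|^2)^2 - |\bra\psi\tilde\psi\rangle|^2}$, as claimed. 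For the final ``in particular'' statement, when $\left\||\tilde\psi\rangle\right\|^2 \le 1$ I would simply bound $\tfrac14(1 + \left\||\tilde\psi\rangle\right\|^2)^2 \le \tfrac14(1 + 1)^2 = 1$, giving $D \le \sqrt{1 - |\bra\psi\tilde\psi\rangle|^2}$.

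The main obstacle, though a minor one, is handling the degenerate cases cleanly: when $d = 0$ (i.e.\ $|\tilde\psi\rangle$ is proportional to $\ket\psi$), $V$ is one-dimensional and the formula must be checked directly, and when $\det(M|_V) = 0$ the ``opposite signs'' argument needs the eigenvalues to be $0$ and $\mathrm{tr}$, for which $|\lambda_+| + |\lambda_-| = |\mathrm{tr}| = \sqrt{(\mathrm{tr})^2}$ still matches $\sqrt{(\mathrm{tr})^2 - 4\det}$. One should verify that the quantity under the square root, $\tfrac14(1 + \left\||\tilde\psi\rangle\right\|^2)^2 - |\bra\psi\tilde\psi\rangle|^2$, is always nonnegative; this follows from Cauchy--Schwarz, $|\bra\psi\tilde\psi\rangle| \le \left\||\tilde\psi\rangle\right\|$, together with the AM--GM-type inequality $\left\||\tilde\psi\rangle\right\| \le \tfrac12(1 + \left\||\tilde\psi\rangle\right\|^2)$, so the square root is well-defined.
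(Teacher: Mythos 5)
Your proof is correct and follows essentially the same route as the paper: both arguments restrict $\ket\psi\!\bra\psi-|\tilde\psi\rangle\!\langle\tilde\psi|$ to the two-dimensional span of the two states and obtain the trace distance as half the sum of the moduli of the eigenvalues of the resulting $2\times2$ Hermitian matrix. The only difference is cosmetic---you keep complex coefficients and use $|\lambda_+|+|\lambda_-|=\sqrt{(\mathrm{tr})^2-4\det}$ (valid since $\det\le 0$), whereas the paper parametrizes $|\tilde\psi\rangle$ by its norm and an angle $\theta$ and computes the eigenvalue moduli explicitly.
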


\textit{Proof of Lemma~\ref{lem:TDnorm}.} We write $|\tilde\psi\rangle=\left\||\tilde\psi\rangle\right\|\left(\cos\theta\ket\psi+\sin\theta\ket{\psi^\perp}\right)$ for some $\theta\in[0,2\pi]$ and some normalised pure state $\ket{\psi^\perp}$ orthogonal to $\ket\psi$. We have $\left|\bra\psi\tilde\psi\rangle\right|^2=\cos^2\theta$, and
\begin{equation}
    \ket\psi\!\bra\psi-|\tilde\psi\rangle\langle\tilde\psi|=\begin{pmatrix}1-\left\||\tilde\psi\rangle\right\|^2\cos^2\theta&-\left\||\tilde\psi\rangle\right\|^2\sin\theta\cos\theta\\-\left\||\tilde\psi\rangle\right\|^2\sin\theta\cos\theta&-\left\||\tilde\psi\rangle\right\|^2\sin^2\theta\end{pmatrix},
\end{equation}
in the $\left(\ket\psi,\ket{\psi^\perp}\right)$ basis. The modulus of the eigenvalues of this matrix are given by
\begin{equation}
    |\lambda_\pm|=\frac12\left[\sqrt{\left(1-\left\||\tilde\psi\rangle\right\|^2\right)^2+4\left\||\tilde\psi\rangle\right\|^2\sin^2\theta}\pm\left(1-\left\||\tilde\psi\rangle\right\|^2\right)\right].
\end{equation}
Hence,
\begin{equation}
    \begin{aligned}
    D&=\frac12\left\|\ket\psi\!\bra\psi-|\tilde\psi\rangle\langle\tilde\psi|\right\|_1\\
    &=\frac12\left(|\lambda_+|+|\lambda_-|\right)\\
    &=\frac12\sqrt{\left(1-\left\||\tilde\psi\rangle\right\|^2\right)^2+4\left\||\tilde\psi\rangle\right\|^2\sin^2\theta}\\
    &=\sqrt{\frac14\left(1+\left\||\tilde\psi\rangle\right\|^2\right)^2+4\left\||\tilde\psi\rangle\right\|^2\cos^2\theta}\\
    &=\sqrt{\frac14\left(1+\left\||\tilde\psi\rangle\right\|^2\right)^2-\left|\bra\psi\tilde\psi\rangle\right|^2}.
    \end{aligned}
\end{equation}
\qed
\begin{lemma}\label{lem:momentbound}
Let $\ket\phi$ be a normalised pure state with bounded moments $\bra\phi\hat a^\dag\hat a\ket\phi=E\ge0$ and $\bra\phi(\hat a^\dag\hat a)^2\ket\phi=M\ge0$. Let $\xi\in\mathbb R$, and let
\begin{equation}
    \ket\psi:=\frac{\hat a^\dag\ket\phi}{\|\hat a^\dag\ket\phi\|}\quad\text{and}\quad\ket{\tilde\psi}:=\frac{\hat a^\dag(\cosh\xi)^{-\hat a^\dag\hat a}\ket\phi}{\|\hat a^\dag\ket\phi\|}.
\end{equation}
Then, $\left\||\tilde\psi\rangle\right\|^2\le1$ and
\begin{equation}
    D\left(\ket\psi,\ket{\tilde\psi}\right)\le C\sqrt\xi.
\end{equation}
where $C=\left(\frac{8(E+M)}{E+1}\right)^{1/4}$.
\end{lemma}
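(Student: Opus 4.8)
The plan is to bound the trace distance between $\ket{\psi}$ and $\ket{\tilde\psi}$ by first invoking Lemma~\ref{lem:TDnorm}, which reduces the problem to two tasks: showing the normalisation bound $\||\tilde\psi\rangle\|^2\le1$, and lower-bounding the overlap $|\langle\psi|\tilde\psi\rangle|$. For the normalisation, I observe that $(\cosh\xi)^{-\hat a^\dag\hat a}$ is a diagonal contraction in the Fock basis (each Fock component $\ket n$ is multiplied by $(\cosh\xi)^{-n}\le1$), so it commutes with $\hat a^\dag\hat a$ and satisfies $\|(\cosh\xi)^{-\hat a^\dag\hat a}\ket{\chi}\|\le\|\ket\chi\|$ for any $\ket\chi$. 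Applying this with $\ket\chi$ essentially $\hat a^\dag\ket\phi$ (after commuting the attenuation operator past $\hat a^\dag$, which only introduces an extra factor of $\cosh\xi$ per creation operator, hence a further contraction) yields $\||\tilde\psi\rangle\|^2\le1$ directly.

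For the overlap, I would compute $\langle\psi|\tilde\psi\rangle=\frac{\bra\phi\hat a\,\hat a^\dag(\cosh\xi)^{-\hat a^\dag\hat a}\ket\phi}{\|\hat a^\dag\ket\phi\|^2}$, using $\|\hat a^\dag\ket\phi\|^2=\bra\phi\hat a\hat a^\dag\ket\phi=E+1$ (since $\hat a\hat a^\dag=\hat a^\dag\hat a+\1$ and $\bra\phi\hat a^\dag\hat a\ket\phi=E$). The numerator is a Fock-diagonal expectation: writing $\ket\phi=\sum_n\phi_n\ket n$, the operator $\hat a\hat a^\dag(\cosh\xi)^{-\hat a^\dag\hat a}$ acts on $\ket n$ as $(n+1)(\cosh\xi)^{-n}\ket n$, so the numerator equals $\sum_n|\phi_n|^2(n+1)(\cosh\xi)^{-n}$. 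The overlap is therefore real and positive, and $1-|\langle\psi|\tilde\psi\rangle|^2$ can be controlled by the deviation of $(\cosh\xi)^{-n}$ from $1$. The key elementary inequality is $1-(\cosh\xi)^{-n}\le n\log\cosh\xi\le n\cdot\tfrac12\xi^2$ (convexity/monotonicity, valid since $\log\cosh\xi\le\xi^2/2$), which shows the numerator differs from $E+1$ by at most $\tfrac12\xi^2\sum_n|\phi_n|^2 n(n+1)$, an expression controlled by the moments $E$ and $M$. I then feed this into $1-|\langle\psi|\tilde\psi\rangle|\le 1-|\langle\psi|\tilde\psi\rangle|^2$ and the Lemma~\ref{lem:TDnorm} bound $D\le\sqrt{1-|\langle\psi|\tilde\psi\rangle|^2}$, tracking how the moment combination $(E+M)/(E+1)$ emerges.

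The main obstacle I anticipate is the bookkeeping to land exactly on the constant $C=\left(\frac{8(E+M)}{E+1}\right)^{1/4}$ and the $\sqrt\xi$ scaling rather than the naively expected $\xi$ scaling. The $\sqrt\xi$ (as opposed to $\xi$) arises because trace distance between nearby pure states scales like the square root of the infidelity, and the infidelity here is $\mathcal O(\xi^2)$; so $D\sim\sqrt{\mathcal O(\xi^2)}$ would suggest linear $\xi$, but the normalisation defect from using the unnormalised $\ket{\tilde\psi}$ contributes at a different order, and one must carefully combine $\left(1+\||\tilde\psi\rangle\|^2\right)^2/4-|\langle\psi|\tilde\psi\rangle|^2$ rather than simple fidelity. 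I expect the cleanest route is to bound $1-|\langle\psi|\tilde\psi\rangle|^2\le 2(1-|\langle\psi|\tilde\psi\rangle|)\le 2\cdot\frac{\frac12\xi^2(E+M)}{E+1}\cdot(\text{a factor})$ and absorb constants, but getting the precise factor of $8$ inside the fourth root will require being slightly generous with intermediate estimates (e.g.\ using $\log\cosh\xi\le\xi$ or $\xi^2/2$ and $1-x^2\le 2(1-x)$ judiciously). Once the $\xi^2$-order infidelity bound is in hand, the fourth-root structure of the claimed constant is what forces the $\sqrt\xi$ dependence, and matching it is a matter of careful constant-chasing rather than any conceptual difficulty.
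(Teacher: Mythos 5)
Your proposal is correct in substance, but it takes a genuinely different route through the key overlap estimate than the paper does, and your route is in fact stronger. The paper never invokes your pointwise inequality $1-(\cosh\xi)^{-k}\le k\log\cosh\xi\le \tfrac{k\xi^2}{2}$; instead it splits $\sum_k(k+1)[1-(\cosh\xi)^{-k}]|\phi_k|^2$ at a cutoff $N$, bounds the head by $[1-(\cosh\xi)^{-N}](E+1)$ and the tail by the Markov-type estimate $\sum_{k>N}(k+1)|\phi_k|^2\le\frac{E+M}{N}$, and then optimises $N=\frac1\xi\sqrt{\frac{2(E+M)}{E+1}}$ in the resulting bound $D^2\le N\xi^2+\frac{2}{N}\frac{E+M}{E+1}$. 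The $\sqrt\xi$ scaling and the constant $\left(\frac{8(E+M)}{E+1}\right)^{1/4}$ fall out of exactly this trade-off. Your direct route, by contrast, gives $1-\langle\psi\vert\tilde\psi\rangle\le\frac{\xi^2}{2}\cdot\frac{\sum_k k(k+1)|\phi_k|^2}{E+1}=\frac{\xi^2(E+M)}{2(E+1)}$, hence via Lemma~\ref{lem:TDnorm} and $1-t^2\le2(1-t)$ (valid since the overlap $t$ here is real with $0\le t\le1$) the linear bound $D\le\xi\sqrt{\frac{E+M}{E+1}}$, which is strictly better than the stated $C\sqrt\xi$ for small $\xi$. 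Your anticipated ``obstacle'' of constant-chasing to land on $8^{1/4}$ is therefore a non-issue: your bound implies the lemma whenever $\xi^2\frac{E+M}{E+1}\le8$, and in the complementary regime $C\sqrt\xi=\left(\frac{8\xi^2(E+M)}{E+1}\right)^{1/4}>2\sqrt2>1\ge D$ (the trace distance cannot exceed $1$ here, by Lemma~\ref{lem:TDnorm} with $\left\||\tilde\psi\rangle\right\|\le1$), so the claim holds trivially. In particular, your speculation that the $\sqrt\xi$ comes from the normalisation defect of the unnormalised $\ket{\tilde\psi}$ is off the mark---it is purely an artifact of the paper's cutoff method.

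Two minor repairs to your write-up. First, your normalisation argument is slightly garbled: commuting $(\cosh\xi)^{-\hat a^\dag\hat a}$ past $\hat a^\dag$ produces a factor $\cosh\xi\ge1$, an \emph{expansion}, not ``a further contraction''; it is harmless only because $\hat a^\dag\ket\phi$ is supported on $\{\ket{k+1}\}_{k\ge0}$, so the net multiplier on each component is $(\cosh\xi)\cdot(\cosh\xi)^{-(k+1)}=(\cosh\xi)^{-k}\le1$. The cleanest fix is the paper's direct Fock-basis computation,
\begin{equation}
    \left\|\hat a^\dag(\cosh\xi)^{-\hat a^\dag\hat a}\ket\phi\right\|^2=\sum_{k\ge0}(k+1)(\cosh\xi)^{-2k}|\phi_k|^2\le\sum_{k\ge0}(k+1)|\phi_k|^2=\left\|\hat a^\dag\ket\phi\right\|^2.
\end{equation}
Second, state explicitly that the overlap is a ratio of nonnegative sums bounded above by $E+1$, so that both $t\ge0$ and $t\le1$ are available when you convert the infidelity bound into a trace-distance bound.
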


\textit{Proof of Lemma~\ref{lem:momentbound}.} Let us denote $\ket\phi=\sum_{k\ge0}\phi_k\ket k$ in Fock basis. We have
\begin{equation}
    \begin{aligned}
        \left\|\hat a^\dag\ket\phi\right\|^2\left\||\tilde\psi\rangle\right\|^2&=\left\|\hat a^\dag(\cosh\xi)^{-\hat a^\dag\hat a}\ket\phi\right\|\\
        &=\sum_{k\ge0}|\phi_k|^2(k+1)(\cosh\xi)^{-2k}\\
        &\le\sum_{k\ge0}|\phi_k|^2(k+1)\\
        &=\left\|\hat a^\dag\ket\phi\right\|^2,
    \end{aligned}
\end{equation}
so $\left\||\tilde\psi\rangle\right\|^2\le1$.
For all $N>0$, we have $\sum_{k>N}|\phi_k|^2\le\frac1N\sum_{k>N}k|\phi_k|^2\le\frac EN$ and similarly $\sum_{k>N}k|\phi_k|^2\le\frac1N\sum_{k>N}k^2|\phi_k|^2\le\frac MN$, so
\begin{equation}\label{SM:tailbound}
    \sum_{k>N}(k+1)|\phi_k|^2\le\frac1N(E+M).
\end{equation}
Moreover,
\begin{equation}
    \begin{aligned}
        D\left(\ket\psi,\ket{\tilde\psi}\right)^2&\le1-\left|\bra\psi\tilde\psi\rangle\right|^2\\
        &=1-\left(\frac{\bra\phi\hat a\hat a^\dag(\cosh\xi)^{-\hat a^\dag\hat a}\ket\phi}{\bra\phi\hat a\hat a^\dag\ket\phi}\right)^2\\
        &=1-\left(\frac{\sum_{k\ge0}(k+1)(\cosh\xi)^{-k}|\phi_k|^2}{E+1}\right)^2\\
        &=1-\left(\frac{\sum_{k\ge0}(k+1)(\cosh\xi)^{-k}|\phi_k|^2}{E+1}\right)^2\\
        &=1-\left(1-\frac{\sum_{k\ge0}(k+1)[1-(\cosh\xi)^{-k}]|\phi_k|^2}{E+1}\right)^2,
    \end{aligned}
\end{equation}
where we used $\left\||\tilde\psi\rangle\right\|^2\le1$ and Lemma~\ref{lem:TDnorm} in the first line.
Let $N>0$, we have
\begin{equation}
    \begin{aligned}
        \sum_{k\ge0}(k+1)[1-(\cosh\xi)^{-k}]|\phi_k|^2&=\sum_{k\le N}(k+1)[1-(\cosh\xi)^{-k}]|\phi_k|^2+\sum_{k>N}(k+1)[1-(\cosh\xi)^{-k}]|\phi_k|^2\\
        &\le[1-(\cosh\xi)^{-N}]\sum_{k\le N}(k+1)|\phi_k|^2+\sum_{k>N}(k+1)|\phi_k|^2\\
        &\le[1-(\cosh\xi)^{-N}]\sum_{k\ge0}(k+1)|\phi_k|^2+\sum_{k>N}(k+1)|\phi_k|^2\\
        &\le[1-(\cosh\xi)^{-N}](E+1)+\frac1N(E+M),
    \end{aligned}
\end{equation}
where we used Eq.~(\ref{SM:tailbound}) in the last line. Putting things together, we obtain
\begin{equation}
    \begin{aligned}
        D\left(\ket\psi,\ket{\tilde\psi}\right)^2&\le1-\left[(\cosh\xi)^{-N}-\frac1N\left(\frac{E+M}{E+1}\right)\right]^2\\
        &\le1-(\cosh\xi)^{-2N}+\frac2N\left(\frac{E+M}{E+1}\right)\\
        &\le N\xi^2+\frac2N\left(\frac{E+M}{E+1}\right).
    \end{aligned}
\end{equation}
Choosing $N=\frac1\xi\sqrt{\frac{2(E+M)}{E+1}}$ and taking the square root completes the proof of the lemma.\quad\qedsymbol

\medskip

\noindent We now combine these results to bound the trace distance between the states $\ket{y_k}$ and $\ket{\tilde y_k}$ by induction.
Let $r\ge0$, let $\ket{\phi_0}$ be a normalised pure state, let $\gamma_1,\dots,\gamma_r\in\mathbb C$, and let $\xi_1,\dots,\xi_r\in\mathbb R$. For all $j\in\{1,\dots,r\}$, let
\begin{equation}
    \ket{\phi_j}:=\frac1{\sqrt{\mathcal N(j)}}\hat D(\gamma_j)\hat a^\dag\hat D(\gamma_{j-1})\dots\hat D(\gamma_1)\hat a^\dag\ket{\phi_0},
\end{equation}
where $\mathcal N(j)=\|\hat D(\gamma_j)\hat a^\dag\hat D(\gamma_{j-1})\dots\hat D(\gamma_1)\hat a^\dag\ket{\phi_0}\|^2$ is a normalisation factor, and let $\ket{\tilde\phi_0}=\ket{\phi_0}$ and
\begin{equation}
    \ket{\tilde\phi_j}:=\frac1{\sqrt{\mathcal N(j)}}\hat D(\gamma_j)\hat a^\dag(\cosh\xi_j)^{-\hat a^\dag\hat a}\hat D(\gamma_{j-1})\dots\hat D(\gamma_1)a^\dag(\cosh\xi_1)^{-\hat a^\dag\hat a}\ket{\phi_0}.
\end{equation}
For all $j\in\{1,\dots,r\}$, we write $E_j:=\bra{\phi_j}\hat a^\dag\hat a\ket{\phi_j}$ and $M_j:=\bra{\phi_j}(\hat a^\dag\hat a)^2\ket{\phi_j}$. We have
\begin{equation}
    \begin{aligned}
        D\left(\ket{\phi_j},\ket{\tilde\phi_j}\right)&=\frac12\left\|\ket{\phi_j}\!\bra{\phi_j}-\ket{\tilde\phi_j}\!\bra{\tilde\phi_j}\right\|_1\\
        &=\frac12\frac{\mathcal N(j-1)}{\mathcal N(j)}\left\|\hat D(\gamma_j)\hat a^\dag\ket{\phi_{j-1}}\!\bra{\phi_{j-1}}\hat a\hat D(\gamma_j)^\dag-\hat D(\gamma_j)\hat a^\dag(\cosh\xi_j)^{-\hat a^\dag\hat a}\ket{\tilde\phi_{j-1}}\!\bra{\tilde\phi_{j-1}}(\cosh\xi_j)^{-\hat a^\dag\hat a}\hat a\hat D(\gamma_j)^\dag\right\|_1\\
        &=\frac12\frac{\mathcal N(j-1)}{\mathcal N(j)}\left\|\hat a^\dag\ket{\phi_{j-1}}\!\bra{\phi_{j-1}}\hat a-\hat a^\dag(\cosh\xi_j)^{-\hat a^\dag\hat a}\ket{\tilde\phi_{j-1}}\!\bra{\tilde\phi_{j-1}}(\cosh\xi_j)^{-\hat a^\dag\hat a}\hat a\right\|_1\\
        &=\frac12\frac{\mathcal N(j-1)}{\mathcal N(j)}\Big\|\hat a^\dag\ket{\phi_{j-1}}\!\bra{\phi_{j-1}}\hat a-a^\dag(\cosh\xi_j)^{-\hat a^\dag\hat a}\ket{\phi_{j-1}}\!\bra{\phi_{j-1}}(\cosh\xi_j)^{-\hat a^\dag\hat a}\hat a\\
        &\quad\quad\quad\quad\quad\quad+a^\dag(\cosh\xi_j)^{-\hat a^\dag\hat a}\ket{\phi_{j-1}}\!\bra{\phi_{j-1}}(\cosh\xi_j)^{-\hat a^\dag\hat a}\hat a-\hat a^\dag(\cosh\xi_j)^{-\hat a^\dag\hat a}\ket{\tilde\phi_{j-1}}\!\bra{\tilde\phi_{j-1}}(\cosh\xi_j)^{-\hat a^\dag\hat a}\hat a\Big\|_1\\
        &\le\frac12\frac{\mathcal N(j-1)}{\mathcal N(j)}\left\|\hat a^\dag\ket{\phi_{j-1}}\!\bra{\phi_{j-1}}\hat a-a^\dag(\cosh\xi_j)^{-\hat a^\dag\hat a}\ket{\phi_{j-1}}\!\bra{\phi_{j-1}}(\cosh\xi_j)^{-\hat a^\dag\hat a}\hat a\right\|_1\\
        &\quad+\frac12\frac{\mathcal N(j-1)}{\mathcal N(j)}\left\|a^\dag(\cosh\xi_j)^{-\hat a^\dag\hat a}\ket{\phi_{j-1}}\!\bra{\phi_{j-1}}(\cosh\xi_j)^{-\hat a^\dag\hat a}\hat a-\hat a^\dag(\cosh\xi_j)^{-\hat a^\dag\hat a}\ket{\tilde\phi_{j-1}}\!\bra{\tilde\phi_{j-1}}(\cosh\xi_j)^{-\hat a^\dag\hat a}\hat a\right\|_1,
    \end{aligned}
\end{equation}
where we used the triangle inequality in the last line.
Introducing the notations $\ket{\psi_j}$ and $\ket{\tilde\psi_j}$ as in Lemma~\ref{lem:momentbound}, and noting that $\frac{\mathcal N(j)}{\mathcal N(j-1)}=\left\|\hat a^\dag\ket{\phi_{j-1}}\right\|^2=E_{j-1}+1$, we obtain the following recurrence relation:
\begin{equation}\label{SM:recurrence}
    \begin{aligned}
        D\left(\ket{\phi_j},\ket{\tilde\phi_j}\right)&\le D\left(\ket{\tilde\psi_j},\ket{\psi_j}\right)+\frac1{2(E_{j-1}+1)}\left\|a^\dag(\cosh\xi_j)^{-\hat a^\dag\hat a}\left(\ket{\phi_{j-1}}\!\bra{\phi_{j-1}}-\ket{\tilde\phi_{j-1}}\!\bra{\tilde\phi_{j-1}}\right)(\cosh\xi_j)^{-\hat a^\dag\hat a}\hat a\right\|_1\\
        &=D\left(\ket{\tilde\psi_j},\ket{\psi_j}\right)+\frac1{2(E_{j-1}+1)}\left(\frac{\cosh^2\xi_j}{\sinh\xi_j}\right)^2\left\|\mathcal E^*_{\xi_j}\left(\ket{\phi_{j-1}}\!\bra{\phi_{j-1}}-\ket{\tilde\phi_{j-1}}\!\bra{\tilde\phi_{j-1}}\right)\right\|_1\\
        &\le D\left(\ket{\tilde\psi_j},\ket{\psi_j}\right)+\frac1{E_{j-1}+1}\left(\frac{\cosh^2\xi_j}{\sinh\xi_j}\right)^2D\left(\ket{\tilde\phi_{j-1}},\ket{\phi_{j-1}}\right)\\
        &\le C_j\sqrt{\xi_j}+\frac1{E_{j-1}+1}\left(\frac{\cosh^2\xi_j}{\sinh\xi_j}\right)^2D\left(\ket{\tilde\phi_{j-1}},\ket{\phi_{j-1}}\right)\\
        &\le C_j\sqrt{\xi_j}+\frac2{\xi_j^2(E_{j-1}+1)}D\left(\ket{\tilde\phi_{j-1}},\ket{\phi_{j-1}}\right),
    \end{aligned}
\end{equation}
where we used the definition of the CPTD map $\mathcal E^*_\xi$ in the second line, the fact that the trace distance is non-increasing under quantum operations in the third line, Lemma~\ref{lem:momentbound} in the fourth line (the constant $C_j$ is the one corresponding to the state $\ket{\phi_j}$ in Lemma~\ref{lem:momentbound}), and the bound $\frac{\cosh^2x}{\sinh x}\le\frac2x$ in the last line (valid for $x$ small enough). We have $\ket{\tilde\phi_0}=\ket{\phi_0}$, so a simple induction using the equation above yields
\begin{equation}
    D\left(\ket{\phi_r},\ket{\tilde\phi_r}\right)\le\sum_{j=1}^r\frac{K_j\sqrt{\xi_j}}{\prod_{i=j+1}^r\xi_i^2},
\end{equation}
where we have defined
\begin{equation}
    K_j:=C_j\prod_{i=j+1}^r\frac2{E_{i-1}+1}.
\end{equation}
For all $k\in\{1,\dots,m\}$, the states $\ket{y_k}$ and $\ket{\tilde y_k}$ are equal to the states $\ket{\phi_{r^\star(y_k)}}$ and $\ket{\tilde\phi_{r^\star(y_k)}}$, respectively, when setting $\gamma_{r^\star(y_k)}=\beta_{k;r^\star(y_k)}$, $\gamma_j=\beta_{k;j-1}-\beta_{k;j}$ for all $j\in\{1,\dots,r^\star(y_k)-1\}$, $\gamma_0=\beta_{k;0}$, $\ket{\phi_0}=\ket{\tilde\phi_0}=\hat S_k\ket{\alpha_k}$, and $\xi_j=\xi_{k;j}$  for all $j\in\{1,\dots,r^\star(y_k)\}$. For all $k\in\{1,\dots,m\}$ and all $j\in\{1,\dots,r^\star(y_k)\}$, we define the intermediate states leading to $\ket{y_k}$ (see Eq.~(\ref{SM:defyk})) as
\begin{equation}
    \ket{\phi_{k;j}}:=\frac1{\sqrt{\mathcal N_{k;j}}}\left[\prod_{i=1}^j \hat D(\beta_{k;i}) \hat a^{\dag}_k \hat D^{\dag}(\beta_{k;i})\right]\hat S_k\ket{\alpha_k},
\end{equation}
where $\mathcal N_{k;j}$ is a normalisation factor.
We thus obtain,
\begin{equation}\label{SM:bounddistyk0}
    D\left(\ket{y_k},\ket{\tilde y_k}\right)\le\sum_{j=1}^{r^\star(y_k)}\frac{K_{k;j}\sqrt{\xi_{k;j}}}{\prod_{i=j+1}^{r^\star(y_k)}\xi_{k;i}^2},
\end{equation}
where
\begin{equation}
    K_{k;j}:=\left(\frac{8(E_{k;j}+M_{k;j})}{E_{k;j}+1}\right)^{1/4}\prod_{i=j+1}^{r^\star(y_k)}\frac2{E_{k;i-1}+1},
\end{equation}
with $E_{k;j}:=\bra{\phi_{k;j}}\hat a^\dag\hat a\ket{\phi_{k;j}}$ and $M_{k;j}:=\bra{\phi_{k;j}}(\hat a^\dag\hat a)^2\ket{\phi_{k;j}}$. We now set recursively
\begin{equation}\label{xikjeps}
    \xi_{k;j}=\frac{\epsilon^2}{4m^2}\left(\frac{\prod_{i=j+1}^{r^\star(y_k)}\xi_{k;i}^2}{r^\star(y_k) K_{k;j}}\right)^2,
\end{equation}
for $j=r^\star(y_k),\dots,1$. A simple induction shows that $\xi_{k;j}$ is a polynomial in $\epsilon$ and $\frac1m$, with the coefficients of this polynomial being entirely determined by the state $\ket{y_k}$ (more precisely by $r^\star(y_k)$ and by the values of $E_{k;i}$ and $M_{k;i}$ for all $i\in\{j,\dots,r^\star(y_k)\}$). Moreover, with this choice, Eq.~(\ref{SM:bounddistyk0}) directly yields
\begin{equation}\label{SM:bounddistyk}
    D\left(\ket{y_k},\ket{\tilde y_k}\right)\le \frac\epsilon{2m},
\end{equation}
for all $k\in\{1,\dots,m\}$.
Note that with a simple induction, the first part of Lemma~\ref{lem:momentbound} shows that $\left\||\tilde y_k\rangle\right\|^2\le1$, for all $k\in\{1,\dots,m\}$. As a consequence,
\begin{equation}\label{SM:disttensorprod}
    \begin{aligned}
        D\left(\bigotimes_{k=1}^m\ket{y_k},\bigotimes_{k=1}^m\ket{\tilde y_k}\right)&=\frac12\left\|\bigotimes_{k=1}^m\ket{y_k}\!\bra{y_k}-\bigotimes_{k=1}^m\ket{\tilde y_k}\!\bra{\tilde y_k}\right\|_1\\
        &=\frac12\left\|\sum_{l=1}^m\bigotimes_{k=1}^{l-1}\ket{y_k}\!\bra{y_k}\otimes(\ket{y_l}\!\bra{y_l}-\ket{\tilde y_l}\!\bra{\tilde y_l})\otimes\bigotimes_{k=l+1}^m\ket{\tilde y_k}\!\bra{\tilde y_k}\right\|_1\\
        &\le\frac12\sum_{l=1}^m\left\|\bigotimes_{k=1}^{l-1}\ket{y_k}\!\bra{y_k}\otimes(\ket{y_l}\!\bra{y_l}-\ket{\tilde y_l}\!\bra{\tilde y_l})\otimes\bigotimes_{k=l+1}^m\ket{\tilde y_k}\!\bra{\tilde y_k}\right\|_1\\
        &\le\frac12\sum_{l=1}^m\left(\prod_{k=l+1}^m\|\ket{\tilde y_k}\!\bra{\tilde y_k}\|_1\right)\left\|\ket{y_l}\!\bra{y_l}-\ket{\tilde y_l}\!\bra{\tilde y_l}\right\|_1\\
        &\le\frac12\sum_{l=1}^m\left\|\ket{y_l}\!\bra{y_l}-\ket{\tilde y_l}\!\bra{\tilde y_l}\right\|_1\\
        &=\sum_{l=1}^mD\left(\ket{y_l},\ket{\tilde y_l}\right),
    \end{aligned}
\end{equation}
where we used the definition of the trace distance in the first and last lines, a telescopic sum in the second line, the triangle inequality in the third line, the submultplicativity of the $1$-norm in the fourth line together with the fact that the states $\ket{y_k}$ are normalised, and $\left\|\ket{\tilde y_l}\!\bra{\tilde y_l}\right\|_1=\left\||\tilde y_k\rangle\right\|^2\le1$ in the fifth line.
Finally, combining Eqs.~(\ref{SM:bounddistyk}) and~(\ref{SM:disttensorprod}) yields
\begin{equation}\label{SM:boundDtensor}
    \begin{aligned}
        D\left(\bigotimes_{k=1}^m\ket{y_k},\bigotimes_{k=1}^m\ket{\tilde y_k}\right)&\le\sum_{l=1}^m D\left(\ket{y_l},\ket{\tilde y_l}\right)\\
        &\le\frac\epsilon2.
    \end{aligned}
\end{equation}
This implies
\begin{equation}
    \left|P(y_1,\dots, y_m \lvert\hat \rho)-\tilde P_{\bm\xi}(y_1,\dots, y_m \lvert\hat \rho)\right|\le\epsilon,
\end{equation}
by Eq.~(\ref{SM:dist1}).
\end{proof}

Note that in Eq.~(3) in the main text we used for brevity the normalisation factor $\frac1{\xi^2}$ in the definition of the approximate probabilities rather than $\left(\frac{\cosh^2\xi}{\sinh\xi}\right)^2$.
In the above proof, we showed $\||\tilde y_k\rangle\|^2\le1$ for all $k\in\{1,\dots,m\}$, so with Eq.~(\ref{SM:tildePxiwithtildeyk}) we have $\tilde P_{\bm\xi}(y_1,\dots, y_m \lvert\hat \rho)\le1$. Defining
\begin{equation}
    P_\text{estimate}(y_1,\dots, y_m \lvert\hat \rho):=\prod_{k=1}^m\!\prod_{j=1}^{r^\star(y_k)}\left(\frac{\sinh\xi_{k;j}}{\xi_{k;j}\cosh^2\xi_{k;j}}\right)^2\tilde P_{\bm\xi}(y_1,\dots, y_m \lvert\hat \rho),
\end{equation}
as in the main text, we obtain
\begin{equation}
    \begin{aligned}
        \left|\tilde P_{\bm\xi}(y_1,\dots, y_m \lvert\hat \rho)-P_\text{estimate}(y_1,\dots, y_m \lvert\hat \rho)\right|&=\left|1-\prod_{k=1}^m\!\prod_{j=1}^{r^\star(y_k)}\left(\frac{\sinh\xi_{k;j}}{\xi_{k;j}\cosh^2\xi_{k;j}}\right)^2\right|\tilde P_{\bm\xi}(y_1,\dots, y_m \lvert\hat\rho)\\
        &\le\left|1-\prod_{k=1}^m\!\prod_{j=1}^{r^\star(y_k)}\left(\frac{\sinh\xi_{k;j}}{\xi_{k;j}\cosh^2\xi_{k;j}}\right)^2\right|.
    \end{aligned}
\end{equation}
We have $\frac{\sinh\xi}{\xi\cosh^2\xi}=1-\mathcal O(\xi^2)$ when $\xi$ goes to $0$. Moreover, with Eq.~(\ref{xikjeps}), we have $\xi_{k;j}=o(\frac\epsilon m)$ for all $k\in\{1,\dots,m\}$ and all $j\in\{1,\dots,r^\star(y_k)\}$, so
\begin{equation}
    \begin{aligned}
        \left|\tilde P_{\bm\xi}(y_1,\dots, y_m \lvert\hat \rho)-P_\text{estimate}(y_1,\dots, y_m \lvert\hat \rho)\right|&\le\left|1-\left[1-o\left(\frac{\epsilon^2} {m^2}\right)\right]^{\sum_{k=1}^mr^\star(y_k)}\right|\\
        &\le\mathcal O(\epsilon).
    \end{aligned}
\end{equation}
Together with Theorem~\ref{th:approx} and the triangle inequality, this justifies Eq.~(3) in the main text.

\section{Coherent state sampling for marginal probabilities}
\label{app:marginals}

Let $p\in\{1,\dots,m\}$. With the notations of the previous section, sampling from the $p$-marginal of the bosonic sampling setup amounts to sampling from
\begin{equation}
    P(y_1,\dots, y_p \lvert \hat \rho) = \tr \left[\hat \rho \bigotimes_{k=1}^p\ket{y_k}\!\bra{y_k}\otimes\bigotimes_{k=p+1}^m\hat{\mathds1}\right].
\end{equation}
Note that we consider the first $p$ coordinates without loss of generality, up to a permutation. We define as in the previous section
\begin{equation}
    \tilde P_{\bm\xi}(y_1,\dots, y_p \lvert \hat \rho) := \tr \left[\hat \rho \bigotimes_{k=1}^p\ket{\tilde y_k}\!\bra{\tilde y_k}\otimes\bigotimes_{k=p+1}^m\hat{\mathds1}\right].
\end{equation}
where for all $k\in\{1,\dots,p\}$, $\ket{\tilde y_k}$ is the state obtained from $\ket{y_k}$ by replacing all $r^\star(y_k)$ photon additions $\hat a^\dag$ by attenuated photon additions $\hat a^\dag(\cosh\xi_{k;j})^{-\hat a^\dag\hat a}$, for some $\xi_{k;j}\in\mathbb R$ (see Eqs.~(\ref{SM:tildePxiwithtildeyk}) and~(\ref{SM:deftildeyk})).

By Theorem~\ref{th:approx} (and Eq.~(\ref{SM:boundDtensor}) in particular), for all $k\in\{1,\dots,p\}$ and all $j\in\{1,\dots,r^\star(y_k)\}$, there exist polynomials which depend only on the states $\ket{y_k}$ such that $\xi_{k;j}=\mathrm{poly}\left(\epsilon,\frac1m\right)$ and
\begin{equation}\label{SM:boundmarginal}
    \left|P(y_1,\dots, y_p \lvert\hat \rho)-\tilde P_{\bm\xi}(y_1,\dots, y_p \lvert\hat \rho)\right|\le\epsilon.
\end{equation}
Similar to Eq.~(\ref{SM:tildePxi}), we also have
\begin{equation}\label{eq:marginalreduction}
    \tilde P_{\bm\xi}(y_1,\dots, y_p\lvert \hat \rho)=\frac1{\cal N}\prod_{k=1}^p\prod_{j=1}^{r^\star(y_k)}\left(\frac{\cosh^2(\xi_{k;j})}{\sinh (\xi_{k;j})}\right)^2\tr\left[ \hat \rho_{\rm total}^{(p)} \left(\bigotimes_{k=1}^p\ket{\alpha_k}\!\bra{\alpha_k}\otimes\bigotimes_{k=p+1}^m\hat{\mathds1} \otimes\ket0\!\bra0^{\otimes\sum_{k=1}^p r^\star(y_k)}\right) \right],
\end{equation}
where $r^\star(y_1),\dots,r^\star(y_p)$ are the stellar ranks of the states $\ket{y_1},\dots,\ket{y_p}$, respectively, and where the state $\hat \rho_{\rm total}^{(p)}$ is defined by
\begin{equation}\label{eq:marginaltotalstate}
   \hat \rho_{\rm total}^{(p)}:= (\hat{\bm S}^{\dag} \otimes \hat{\mathds{1}}_{\rm aux})\hat {\cal U}^{\dag} \left(\hat \rho \otimes\ket{1}\!\bra{1}^{\otimes\sum_{k=1}^p r^\star(y_k)}\right) \hat {\cal U} (\hat{\bm S}\otimes\hat{\mathds{1}}_{\rm aux}),
\end{equation}
where $\hat{\bm S}:=\bigotimes_{k=1}^p\hat S_k$ and where $\hat {\cal U}$ is given by $\hat {\cal U} := \bigotimes_{k=1}^p\prod_{n=1}^{r^\star(y_k)} \hat D(\beta_{k;j}) \hat U^{\dag}(\xi_{k;j}) \hat D^{\dag}(\beta_{k;j})$.

Hence, any $p$-marginal probability of a bosonic sampling setup with input state $\hat\rho$ can be approximated up to $\epsilon$-additive precision by a $p$-marginal probability of an associated coherent state sampler, which takes as input the state $\hat\rho^{(p)}_{\text{total}}$, satisfying $r^\star(\hat\rho^{(p)})=r^\star(\hat\rho)+\sum_{k=1}^pr^\star(y_k)$. As in the previous section, we can pick $\epsilon=\mathcal O\left(\text{exp}(-\text{poly }m)\right)$ in Eq.~(\ref{SM:boundmarginal}) with only a polynomial time computational overhead in $m$.

\section{Strong simulation of bosonic computations}
\label{app:strong}

In this section, we prove a generalized version of Theorem~1 from the main text. We first recall the definition of strong simulation:

\begin{definition}[Strong simulation]
Let $P$ be a probability distribution (density). Strong simulation of $P$ refers to the computational task of evaluating $P$ or any marginal of $P$ at any given outcome.
\end{definition}

\noindent Importantly, under mild assumptions, the ability to strongly simulate a probability distribution (density) implies the ability to exactly sample from this distribution (density), also referred to as weak simulation~\cite{terhal2002classical,pashayan2015estimating}.

We also introduce a notion of approximate strong simulation:

\begin{definition}[$\epsilon$-approximate strong simulation]
Let $P$ be a probability distribution (density). For $\epsilon>0$, $\epsilon$-approximate strong simulation refers to the computational task of strongly simulating a probability distribution $Q$ such that $\|P-Q\|_\text{tvd}\le\epsilon$, where $\|\cdot\|_\text{tvd}$ is the total variation distance.
\end{definition}

\noindent As a direct consequence of this definition, the ability to $\epsilon$-approximately strongly simulate a probability distribution (density) implies the ability to approximately sample from this distribution (density), which is the relevant computational task in experimental demonstrations of quantum computational advantage~\cite{harrow2017quantum}.

Note that any bound on the total variation distance between two probability distributions also holds for the total variation distance between any of their corresponding marginals: let $P$ and $Q$ be two probability distributions over some sample space $\mathcal X^m$, and let $n\in\{1,\dots,m\}$. Writing $P_n$ and $Q_n$ the marginal distributions over, e.g., the first $n$ coordinates, we have:
\begin{equation}
    \begin{aligned}
        \|P_n-Q_n\|_\text{tvd}&=\frac12\sum_{x_1,\dots,x_n\in\mathcal X }\left|P_n(x_1,\dots,x_n)-Q_n(x_1,\dots,x_n)\right|\\
        &=\frac12\sum_{x_1,\dots,x_n\in\mathcal X}\left|\sum_{x_{n+1},\dots,x_m\in\mathcal X}\left[P(x_1,\dots,x_n,x_{n+1},\dots,x_m)-Q_(x_1,\dots,x_n,x_{n+1},\dots,x_m)\right]\right|\\
        &\le\frac12\sum_{x_1,\dots,x_m\in\mathcal X}\left|P(x_1,\dots,x_m)-Q_(x_1,\dots,x_m)\right|\\
        &=\|P-Q\|_\text{tvd},
    \end{aligned}
\end{equation}
where we used the triangle inequality in the third line.

We consider an $m$-mode bosonic computation over an exponentially large outcome space $\mathcal Y=\mathcal Y_1\times\dots\times\mathcal Y_m$ which samples an outcome $\bm y=(y_1,\dots,y_m)\in\mathcal Y$ using $m$ local detectors from the probability density
\begin{equation}
    \tr\left[\hat\rho\bigotimes_{k=1}^m\hat P_{k;y_k}\right],
\end{equation}
where $\hat\rho$ is an $m$-mode quantum state and where $\hat P_{k;y_k}$ are rank-one projectors without loss of generality, such that $\hat P_{k;y_k}=\ket{y_k}\!\bra{y_k}$.

The state $\hat\rho$ can be any multimode mixed state, but in a typical sampling setup it is generated by a series of (possibly noisy) few-mode gates applied to single-mode input states. We therefore assume that a purification $\hat\rho=\tr_{\mathcal E}(\ket{\bm\psi}\!\bra{\bm\psi})$ is known, where $\mathcal E$ is an $m$-mode purification Hilbert space. 

In most bosonic computations the state $\ket{\bm\psi}$ will have some finite stellar rank. In any case, by property~\ref{enum:vii}, for any $\epsilon>0$ the pure state $\ket{\bm\psi}$ is at least $\epsilon$-close in trace distance to a state $\smash{|\tilde{\bm\psi}\rangle}$ of finite stellar rank denoted $r^\star_\text{in}$. The trace distance is non-increasing under partial trace, so $\hat\rho$ is $\epsilon$-close in trace distance to $\smash{\tr_{\mathcal E}(|\tilde{\bm\psi}\rangle\!\langle\tilde{\bm\psi}|)}$. We also denote by $s$ the support size of the core state associated to the state $\smash{|\tilde{\bm\psi}\rangle}$.

By property~\ref{enum:vii}, the state $\smash{|\tilde{\bm\psi}\rangle}$ can be obtained by an optimization over ${\cal O}(m^2)$ parameters for any given rank. More precisely, the minimal achievable trace distance with a target $(2m)$-mode state $\ket{\bm\psi}$ using states of stellar rank less or equal to $k$ is given by~\cite[Theorem 2]{chabaud2020certification}:
\begin{equation}\label{eq:opti}
    \begin{aligned}
        \inf_{r^\star(\hat\sigma)\le k}D(\hat\sigma,\bm\psi)&=\sqrt{1-\sup_{\hat G\in\mathcal G}\tr\left[\Pi_k\hat G\ket{\bm\psi}\!\bra{\bm\psi}\hat G^\dag\right]}\\
        &=\sqrt{1-\sup_{\hat G\in\mathcal G}\sum_{|\bm n|\le k}\left|\bra{\bm n}\hat G\ket{\bm\psi}\right|^2},
    \end{aligned}
\end{equation}
where $\Pi_k=\sum_{|\bm n|\le k}\ket{\bm n}\!\bra{\bm n}$ and where the supremum on the right hand side is over a subset of $(2m)$-mode Gaussian unitary operations which can be parametrized using $(2m)^2-8m+1={\cal O}(m^2)$ real parameters. Importantly, the number of parameters to optimize is independent of the stellar rank of the approximation. Moreover, assuming the optimization yields a Gaussian unitary $\hat G_0$, then the corresponding approximating state is given by
\begin{equation}
    \ket{\tilde{\bm\psi}}=\hat G_0^\dag\left(\frac{\Pi_k\hat G_0\ket{\bm\psi}}{\left\|\Pi_k\hat G_0\ket{\bm\psi}\right\|}\right).
\end{equation}
The optimization is only necessary when the state $\ket{\bm\psi}$ has infinite stellar rank. While the function to optimize in Eq.~(\ref{eq:opti}) may become hard to compute for a large number of modes, in most cases it may be drastically simplified, e.g.\ if the infinite stellar rank state $\ket{\bm\psi}$ is a tensor product of single-mode states of infinite stellar rank (such as GKP states or cat states) followed by a multimode Gaussian unitary. Furthermore, it is not necessary to obtain an optimal solution of the optimization in the present case, but rather any solution which ensures $\epsilon$-closeness. Hence, we assume in what follows that the normal form of an approximating state as in Eq.~(\ref{eq:opti}) is available, with stellar rank $r^\star_\text{in}$ and core state support size $s$.

Similarly, we assume that each tensor product $\bigotimes_{k=1}^m\ket{y_k}\!\bra{y_k}$ has finite stellar rank $\sum_{k=1}^mr^\star(y_k)$, since a finite stellar rank approximation can be obtained efficiently. For all $k\in\{1,\dots,m\}$, we write $r^\star_k=\sup_{y_k\in\mathcal Y_k}r^\star(y_k)$. Finally, we write $r_\epsilon:=r^\star_\text{in}+\sum_{k=1}^mr^\star_k$, where the index indicates that this quantity depends on the choice of $\epsilon$, with 
\begin{equation}
    \begin{aligned}
        \lim_{\epsilon\rightarrow0}r_\epsilon&=r^\star(\bm\psi)+\sum_{k=1}^mr^\star_k\\
        &\ge r^\star(\hat\rho)+\sum_{k=1}^mr^\star_k\in\mathbb N\cup\{+\infty\},
    \end{aligned}
\end{equation}
where we used property~\ref{enum:vii} in the first line and property~\ref{enum:iii} in the second line (specifically the fact that the stellar rank is non-increasing under partial trace). We drop the $\epsilon$ index for brevity in what follows.

With these notations introduced, we now state our main result:

\begin{theorem}\label{th:strong}
    The measurement $\left\{\bigotimes_{k=1}^m\ket{y_k}\!\bra{y_k}\right\}_{ y\in\mathcal Y}$ of the state $\hat\rho$ can be $(2\epsilon)$-approximately strongly simulated classically in time ${\cal O}(s^2r^32^r+\text{poly }m)$.
\end{theorem}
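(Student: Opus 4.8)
\emph{Proof plan.} The plan is to assemble the two reductions already established—finite-rank approximation of the input and the coherent-state-sampler reduction—and then import the Gaussian-circuit simulation algorithm of \cite[Theorem 2]{chabaud2020classical}. First I would approximate the purification $\ket{\bm\psi}$ of $\hat\rho$ by a finite-stellar-rank state $|\tilde{\bm\psi}\rangle$ of rank $r^\star_\text{in}$ and core-state support size $s$, which by property~\ref{enum:vii} exists $\epsilon$-close in trace distance and whose normal form follows from the $\mathcal O(m^2)$-parameter optimization of Eq.~(\ref{eq:opti}). Since the trace distance is non-increasing under partial trace and the total variation distance of outcome distributions is bounded by the trace distance of the measured states, replacing $\hat\rho$ by $\tr_{\mathcal E}(|\tilde{\bm\psi}\rangle\langle\tilde{\bm\psi}|)$ perturbs the output distribution by at most $\epsilon$ in total variation distance, uniformly over the whole outcome space $\mathcal Y$.

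Next I would invoke Theorem~\ref{th:approx}: replacing every photon subtraction implicit in the projectors $\ket{y_k}\!\bra{y_k}$ by the weak two-mode-squeezing gadget produces a distribution $\tilde P_{\bm\xi}$ whose per-outcome additive error is at most any chosen $\delta>0$ for parameters $\xi_{k;j}=\text{poly}(\delta,\tfrac1m)$. Taking $\delta=2\epsilon/|\mathcal Y|$—inverse-exponential but representable with only $\text{poly }m$ bits, hence a $\text{poly }m$ overhead—makes $\tilde P_{\bm\xi}$ itself $\epsilon$-close in total variation distance, so the triangle inequality gives total error $2\epsilon$. Crucially, by Eq.~(\ref{SM:tildePxi}) the distribution $\tilde P_{\bm\xi}$ is exactly (up to a known prefactor) the Husimi $Q$-function of a state $\hat\rho_\text{total}$ read out on coherent states and vacua, where $\hat\rho_\text{total}$ is a fixed Gaussian unitary applied to $|\tilde{\bm\psi}\rangle\langle\tilde{\bm\psi}|\otimes\ket1\!\bra1^{\otimes n}$ with $n=\sum_k r^\star(y_k)$. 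Absorbing the Gaussian unitary in the normal form of $|\tilde{\bm\psi}\rangle$ (property~\ref{enum:v}), this is a single Gaussian unitary acting on a core state and measured by coherent-state projection—precisely the setting of \cite[Theorem 2]{chabaud2020classical}.

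It then remains to identify the two parameters controlling that algorithm's cost. By property~\ref{enum:iii} the stellar rank is additive under tensoring with pure states, so the rank of the sampler's input is $r^\star_\text{in}+\sum_k r^\star(y_k)\le r^\star_\text{in}+\sum_k r^\star_k = r$; and since each auxiliary $\ket1$ has core-state support size one and support sizes multiply under tensor products, the core-state support size remains $s$. Feeding $s$ and $r$ into \cite[Theorem 2]{chabaud2020classical} yields runtime $\mathcal O(s^2 r^3 2^r)$, plus the $\text{poly }m$ overhead from the input-approximation optimization and the handling of the $\mathcal O(m^2)$ Gaussian parameters, giving the stated $\mathcal O(s^2 r^3 2^r+\text{poly }m)$. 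Marginals require no new argument: Appendix~\ref{app:marginals} shows that every $p$-marginal of the bosonic sampler maps to a $p$-marginal of a coherent-state sampler with the same rank budget, and these are simulated identically.

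The genuinely novel work lies in Theorem~\ref{th:approx} and Eq.~(\ref{SM:boundDtensor}); given those, the main obstacle here is bookkeeping rather than a single hard estimate. The delicate point is reconciling the \emph{per-outcome} additive guarantees of Theorem~\ref{th:approx} with the \emph{total-variation} requirement over an exponentially large $\mathcal Y$: one must verify that the bound depends only on the fixed states $\ket{y_k}$ (hence only on $r^\star_k$), so that the inverse-exponential choice of $\delta$ stays polynomial in bit-complexity. One must also check that the imported algorithm outputs a bona fide normalised distribution $Q$ with $\|P-Q\|_\text{tvd}\le 2\epsilon$—not merely an estimator of individual probabilities—so that the definition of $(2\epsilon)$-approximate strong simulation is literally met, including for all marginals.
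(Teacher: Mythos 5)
Your proposal is correct and follows essentially the same route as the paper's proof of Theorem~\ref{th:strong}: finite-rank approximation of the purified input via Eq.~(\ref{eq:opti}), reduction of each (marginal) probability to a coherent-state-sampler marginal via Theorem~\ref{th:approx} and Appendix~\ref{app:marginals}, identification of the resulting setup as a Gaussian circuit acting on a core state of support size $s$ and rank at most $r$, application of \cite[Theorem 2]{chabaud2020classical}, and the same triangle-inequality accounting of the two $\epsilon$ contributions with inverse-exponential per-outcome precision absorbed into the $\text{poly }m$ overhead. Your flagged delicate points (per-outcome error versus total variation over exponentially large $\mathcal Y$, and bit-complexity of the $\xi_{k;j}$) are exactly the issues the paper resolves by the same argument.
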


\begin{proof}
By the preceding discussion, the state $\hat\rho=\tr_{\mathcal E}\left[\ket{\bm\psi}\!\bra{\bm\psi}\right]$ is $\epsilon$-close in trace distance to a state
\begin{equation}\label{eq:proofth1_1}
    \hat\sigma=\tr_{\mathcal E}\left[\ket{\tilde{\bm\psi}}\!\bra{\tilde{\bm\psi}}\right],
\end{equation}
where
\begin{equation}\label{eq:proofth1_2}
    |\tilde{\bm\psi}\rangle=\hat G_0^\dag\left(\frac{\Pi_{r^\star_\text{in}}\hat G_0\ket{\bm\psi}}{\left\|\Pi_{r^\star_\text{in}}\hat G_0\ket{\bm\psi}\right\|}\right)
\end{equation}
has finite stellar rank $r^\star_\text{in}$.
By the operational property of the trace distance and the triangle inequality, it thus suffices to show that the measurement ${\left\{\bigotimes_{k=1}^m\ket{y_k}\!\bra{y_k}\right\}_{ y\in\mathcal Y}}$ of the state $\hat\sigma$ can be $\epsilon$-approximately strongly simulated classically in time ${\cal O}(s^2r^32^r+\text{poly }m)$, where $r=r^\star_\text{in}+\sum_{k=1}^mr^\star_k$.
To do so requires two separate steps: 

\begin{itemize}
    \item Firstly, using the derivation from section~\ref{app:marginals}, we relate the corresponding marginal probabilities to marginal probabilities of Gaussian circuits with non-Gaussian input core states. 
    \item Secondly, we make use of the strong simulation algorithm of Gaussian circuits with non-Gaussian input core states from~\cite[Theorem~2]{chabaud2020classical}, which we recall below:
\end{itemize}

\begin{theorem}[\cite{chabaud2020classical}]\label{th:corestrong}
Let $m\in\mathbb N^*$ and let $\ket C$ be an $m$-mode core state of support size $s$ and stellar rank $n$. Then, Gaussian circuits over $m$ modes with input $\ket C$ and double homodyne detection can be strongly simulated classically in time ${\cal O}(s^2n^32^n +\text{poly }m)$.
\end{theorem}

\noindent Let $k\in\{1,\dots,m\}$. Being of finite stellar rank, the single-mode state $\ket{y_k}$ can be efficiently decomposed as
\begin{equation}
    \ket{y_k}=\frac1{\sqrt{{\cal N}_k}} \left[\prod_{j=1}^{r^\star(y_k)}\hat D(\beta_{k;j})\hat a^{\dag}_k\hat D^\dag(\beta_{k;j})\right]\hat S_k\ket{\alpha_k},
\end{equation}
where $\hat S_k$ is a squeezing operation, where $\alpha_k,\beta_{k;j}\in\mathbb C$ and where ${\cal N}_k$ is a normalisation factor.
By section~\ref{app:marginals}, any $p$-marginal probability (say without loss of generality the first $p$ modes) of the measurement ${\left\{\bigotimes_{k=1}^m\ket{y_k}\!\bra{y_k}\right\}_{ y\in\mathcal Y}}$ of the state $\hat\sigma$ approximated to $\left(\text{exp}(-\text{poly }m)\right)$-additive precision by a $p$-marginal probability of an associated coherent state sampler, which takes as input the state
\begin{equation}
    \begin{aligned}
        \hat\sigma^{(p)}_\text{total}:=&(\hat{\bm S}^{\dag} \otimes \hat{\mathds{1}}_{\rm aux})\hat {\cal U}^{\dag} \left(\hat \sigma \otimes\ket{1}\!\bra{1}^{\otimes\sum_{k=1}^p r^\star(y_k)}\right) \hat {\cal U} (\hat{\bm S}\otimes\hat{\mathds{1}}_{\rm aux})\\
        =&\tr_{\mathcal E}\left[(\hat{\bm S}^{\dag}\otimes \hat{\mathds{1}}_{\rm aux})\hat {\cal U}^{\dag}\left(\ket{\tilde{\bm\psi}}\!\bra{\tilde{\bm\psi}}\otimes\ket{1}\!\bra{1}^{\otimes\sum_{k=1}^p r^\star(y_k)}\right) \hat {\cal U} (\hat{\bm S}\otimes\hat{\mathds{1}}_{\rm aux})\right]\\
        =&\frac1{\left\|\Pi_{r^\star_\text{in}}\hat G_0\ket{\bm\psi}\right\|^2}\tr_{\mathcal E}\left[(\hat{\bm S}^{\dag}\otimes \hat{\mathds{1}}_{\rm aux})\hat {\cal U}^{\dag}(\hat G_0^\dag\otimes\hat{\mathds{1}}_{\rm aux}) \left(\Pi_{r^\star_\text{in}}\hat G_0\ket{\bm\psi}\!\bra{\bm\psi}\hat G_0^\dag\Pi_{r^\star_\text{in}}\otimes\ket{1}\!\bra{1}^{\otimes\sum_{k=1}^p r^\star(y_k)}\right)(\hat G_0\otimes\hat{\mathds{1}}_{\rm aux})\hat{\cal U}(\hat{\bm S}\otimes\hat{\mathds{1}}_{\rm aux})\right]\\
        =&\frac1{\left\|\Pi_{r^\star_\text{in}}\hat G_0\ket{\bm\psi}\right\|^2}\tr_{\mathcal E}\left[\hat G\left(\Pi_{r^\star_\text{in}}\hat G_0\ket{\bm\psi}\!\bra{\bm\psi}\hat G_0^\dag\Pi_{r^\star_\text{in}}\otimes\ket{1}\!\bra{1}^{\otimes\sum_{k=1}^p r^\star(y_k)}\right)\hat G^\dag\right],
    \end{aligned}
\end{equation}
where $\hat {\cal U}$ is given by $\hat {\cal U} := \bigotimes_{k=1}^p\prod_{n=1}^{r^\star(\tilde y_k)} \hat D(\beta_{k;j}) \hat U^{\dag}(\xi_{k;j}) \hat D^{\dag}(\beta_{k;j})$ with $\xi_{k;j}=\mathcal O\left(\text{exp}(-\text{poly }m)\right)$, where we used Eq.~(\ref{eq:marginaltotalstate}) in the first line, Eq.~(\ref{eq:proofth1_1}) in the second line, Eq.~(\ref{eq:proofth1_2}) in the third line, and where we defined the Gaussian unitary $\hat G:=(\hat{\bm S}^{\dag}\otimes \hat{\mathds{1}}_{\rm aux})\hat {\cal U}^{\dag}(\hat G_0^\dag\otimes\hat{\mathds{1}}_{\rm aux})$ in the last line.
By Eq.~(\ref{eq:marginalreduction}), the corresponding coherent state sampler marginal probability is given by
\begin{equation}
    \begin{aligned}
        \tilde P_{\bm\xi}(y_1,\dots, y_p \lvert \hat \sigma)&=\frac1{\cal N}\prod_{k=1}^p\prod_{j=1}^{r^\star(y_k)}\left(\frac{\cosh^2(\xi_{k;j})}{\sinh (\xi_{k;j})}\right)^2\tr\left[ \hat \sigma_{\rm total}^{(p)} \left(\bigotimes_{k=1}^p\ket{\alpha_k}\!\bra{\alpha_k}\otimes\bigotimes_{k=p+1}^m\hat{\mathds1} \otimes\ket0\!\bra0^{\otimes\sum_{k=1}^p r^\star(y_k)}\right) \right]\\
        &\propto\tr\left[\left(\Pi_{r^\star_\text{in}}\hat G_0\ket{\bm\psi}\!\bra{\bm\psi}\hat G_0^\dag\Pi_{r^\star_\text{in}}\otimes\ket{1}\!\bra{1}^{\otimes\sum_{k=1}^p r^\star(y_k)}\right)\hat G^\dag \left(\bigotimes_{k=1}^p\ket{\alpha_k}\!\bra{\alpha_k}\bigotimes_{j=1}^{m-p}\hat{\mathds1}\otimes\hat{\mathds1}_{\mathcal E}\otimes\ket{0}\!\bra{0}^{\otimes\sum_{k=1}^p r^\star(y_k)}\right)\hat G\right].
    \end{aligned}
\end{equation}
Now this expression corresponds to the marginal probability density of a Gaussian circuit $\hat G$ with double homodyne detection and non-Gaussian input core state
\begin{equation}
    \Pi_{r^\star_\text{in}}\hat G_0\ket{\bm\psi}\!\bra{\bm\psi}\hat G_0^\dag\Pi_{r^\star_\text{in}}\otimes\ket{1}\!\bra{1}^{\otimes\sum_{k=1}^p r^\star(y_k)}.
\end{equation}
This core state has the same support size as the state $\smash{|\tilde{\bm\psi}\rangle}$, which we denoted as $s$. Moreover, by property~\ref{enum:iii} the stellar rank is additive for pure states so this core state has stellar rank equal to $r_p:=r^\star_\text{in}+\sum_{k=1}^p r^\star(y_k)$.
We may now apply the strong simulation algorithm of Gaussian circuits with non-Gaussian input core states from~\cite{chabaud2020classical}: by Theorem~\ref{th:corestrong}, this marginal probability density can be evaluated exactly in time ${\cal O}(s^2r_p^32^{r_p} +\text{poly }m)$.
The same reasoning holds for all $p\in\{0,\dots,m\}$, so that all marginal probabilities of the associated coherent state sampler can be evaluated in time at most ${\cal O}(s^2r_m^32^{r_m} +\text{poly }m)$, where $r_m=r^\star_\text{in}+\sum_{k=1}^m r^\star(y_k)\le r$.

As a result, all probabilities and marginal probabilities of the measurement ${\left\{\bigotimes_{k=1}^m\ket{y_k}\!\bra{y_k}\right\}_{ y\in\mathcal Y}}$ of the state $\hat\sigma$ can be approximated to $(\text{exp}(-\text{poly }m))$-additive precision in time ${\cal O}(s^2r^32^r +\text{poly }m)$. By summing over all possible outcomes, this implies that this computation can be strongly simulated up to total variation distance $|\mathcal Y|\text{exp}(-\text{poly }m)$. Since $\mathcal Y$ has exponential size, this total variation distance can be brought to an arbitrarily small value, and smaller than $\epsilon$ in particular, with only a polynomial computational overhead in $m$. Hence, the measurement ${\left\{\bigotimes_{k=1}^m\ket{y_k}\!\bra{y_k}\right\}_{ y\in\mathcal Y}}$ of the state $\hat\sigma$ can be $\epsilon$-approximately strongly simulated in time ${\cal O}(s^2r^32^r +\text{poly }m)$, which completes the proof.

\end{proof}

\noindent Note that the time complexity in Theorem~\ref{th:strong} comes from the worst-case complexity of Theorem~\ref{th:corestrong}, based on the fastest known classical algorithm for computing the hafnian~\cite{bjorklund2019faster}, but this time complexity can be reduced for specific bosonic computations. For instance, for a Boson Sampling computation~\cite{10.1145/1993636.1993682} with $n$ input single photons and a unitary interferometer over $m$ modes with $m\times m$ unitary matrix $U$, our algorithm in Theorem~\ref{th:strong} computes hafnians of matrices of size at most $(4n)\times(4n)$. On the other hand, in the limit where the two-mode squeezing parameter $r$ goes to $0$, these hafnians reduce to permanents of submatrices of $U$ of size at most $n\times n$, which can be computed more efficiently.

\section{Efficient sampling from the \texorpdfstring{$Q$}{}-function of separable states}
\label{app:Qsamp}

In this section, we discuss the assumption from the main text that the $Q$-function of separable states can be efficiently sampled from classically.
It should be stressed that, even though a state's $Q$-function is a well-defined probability density, sampling from a highly multivariate probability density is not necessarily an easy computational task. It is therefore useful to set some basic assumptions on the sampling protocols that can be implemented efficiently. 

\medskip

From the theory of holomorphic functions, we know that $Q$-functions for single-mode states with a finite stellar rank are well-behaved \cite{chabaud2020stellar}. Furthermore, any single-mode state can be arbitrarily well approximated by a state with a finite stellar rank. This means that, for any sample size we want to generate from a certain state $\hat \sigma$, we can always find a state with finite stellar rank state that is sufficiently close to $\hat \sigma$ to have produced a statistically indistinguishable sample. Therefore, we assume that we can efficiently sample from every single-mode $Q$-function. This means that for a state $\hat \sigma = \hat \sigma_1 \otimes \dots \otimes \hat \sigma_N$ which is factorised in the measurement basis, we find that the corresponding probability density function can be written as $P(\vec \alpha \lvert \hat \sigma) = \prod_{k=1}^N P(\alpha_k \lvert \hat \sigma_k)$, and thus we can sample from this probability density by sampling each $\alpha_k$ individually from $P(\alpha_k \lvert \hat \sigma_k)$. 

However, when the state is not pure, and separable rather than factorisable, the situation is more subtle. In this case, we find that the state can be written as $\hat \sigma =\int_{\Lambda}d\lambda\, P(\lambda) \hat \sigma^{\lambda}_1 \otimes\dots \otimes \hat \sigma^{\lambda}_N  $, where $\lambda\in \Lambda$ labels the states in the mixture. The output probability density function can then be written as $\smash{P(\vec \alpha \lvert \hat \sigma) = \int_{\Lambda} d\lambda \, P(\lambda) \prod_{k=1}^N P(\alpha_k \lvert \hat \sigma^{\lambda}_k)}$. For any value of $\lambda$, we can still efficiently sample from the distributions $P(\alpha_k \lvert \hat \sigma^{\lambda}_k)$. Thus sampling from $P(\alpha_1, \dots, \alpha_N \lvert \hat \sigma)$ can still be efficiently done by first selecting a label $\lambda$ from the probability distribution $P(\lambda)$ and than sampling the $\alpha_k$ from $P(\alpha_k \lvert \hat \sigma^{\lambda}_k)$. 

Note that this procedure relies on two assumptions: first, we assume that the decomposition $\hat \sigma =\int_{\Lambda}d\lambda\, P(\lambda) \hat \sigma^{\lambda}_1 \otimes\dots \otimes \sigma^{\lambda}_N  $ is known; second, we assume that we can efficiently sample from $P(\lambda)$. In many sampling setups, these assumptions are not unreasonable since sampling experiments aim at using pure quantum resources. In quantum optics experiments in particular, the experimental imperfection that lead to mixed states can often be modeled with good accuracy such that the first assumption is reasonable. The second assumption can be grounded in the fact that decoherence in optical experiments is typically a Gaussian process, such that the labels $\lambda$ are effectively distributed according to a multivariate Gaussian.

\end{document}